\newtheorem{theorem}{Theorem}
\newtheorem{definition}[theorem]{Definition}
\newtheorem{lemma}[theorem]{Lemma}
\newtheorem{proposition}[theorem]{Proposition}
\newenvironment{proof}{ \textbf{Proof:} }{ \hfill $\Box$}
\newcommand{\secref}[1]{{Section}~\ref{#1}}
\newcommand{\figref}[1]{{Fig.}~\ref{#1}}
\newcommand{\bydef}{\triangleq}
\def\bydef{:=}
\def\bb0{{\mathbb{0}}}
\def\bydef{:=}
\def\bb{{\mathbf{b}}}
\def\b0{{\mathbf{0}}}
\def\bbE{{\mathbb{E}}}
\def\bbP{{\mathbb{P}}}
\def\cN{\mathcal{N}}
\def\bydef{:=}
\def\sf0{{\mathsf{0}}}
\def\Nt{{N_{\mathrm{t}}}}
\def\Nr{{N_{\mathrm{r}}}}
\def\Nr{{N_\mathrm{r}}}
\def\SINR{\mathrm{SINR}}
\def\SIR{\mathrm{SIR}}
\def\Landau{{{\mathcal O}}}
\def\Landau{{{\mathcal O}}}
\begin{document}

\title{Modeling Heterogeneous Network Interference Using Poisson Point Processes}
\author{Robert W. Heath Jr., Marios Kountouris, and Tianyang Bai
\thanks{R. W. Heath Jr. and Tianyang  Bai are with The University of Texas at Austin, Austin, TX, USA (email: rheath@ece.utexas.edu).
M. Kountouris is with the Department of Telecommunications, SUPELEC (Ecole Sup{\'e}rieure d'Electricit{\'e}), Gif-sur-Yvette, France.}}

\maketitle
\begin{abstract}
Cellular systems are becoming more heterogeneous with the introduction of low power nodes including femtocells, relays, and distributed antennas. Unfortunately, the resulting interference environment is also becoming more complicated, making evaluation of different communication strategies challenging in both analysis and simulation. Leveraging recent applications of stochastic geometry to analyze cellular systems, this paper proposes to analyze downlink performance in a fixed-size cell, which is inscribed within a weighted Voronoi cell in a Poisson field of interferers.  A nearest out-of-cell interferer,  out-of-cell interferers outside a guard region, and cross-tier interference are included in the interference calculations. Bounding the interference power as a function of distance from the cell center, the total interference is characterized through its Laplace transform. An equivalent marked process is proposed for the out-of-cell interference under additional assumptions. To facilitate simplified calculations, the interference distribution is approximated using the Gamma distribution with second order moment matching. The Gamma approximation simplifies calculation of the success probability and average rate, incorporates small-scale and large-scale fading, and works with co-tier and cross-tier interference. Simulations show that the proposed model provides a flexible way to characterize outage probability and rate as a function of the distance to the cell edge.
\end{abstract}

\section{Introduction} \label{sec:intro}

Cellular network deployment is taking on a massively heterogeneous character as a variety of infrastructure is being deployed including macro, pico, and femto base stations~\cite{ChaAndGat:Femtocell-networks:-a-survey:08}, as well as fixed relay stations~\cite{PabWalSch:Relay-based-deployment-concepts:04} and distributed antennas~\cite{Zhu:Performance-Comparison-Between:11}. A major challenge in deploying heterogeneous cellular networks is managing interference. The problem is compounded by the observation that the deployment of much of the small cell infrastructure will be demand based and more irregular than traditional infrastructure deployments~\cite{ChaAnd:Uplink-Capacity-and-Interference:09}. Further, as urban areas are built out, even macro and micro base station infrastructure is  becoming less like points on a hexagonal lattice and more random~\cite{AndBacGan:A-Tractable-Approach-to-Coverage:11}. Consequently, the aggregate interference environment is more complicated to model, and evaluating the performance of communication techniques in the presence of heterogeneous interference is challenging.

Stochastic geometry can be used to develop a mathematical framework for dealing with interference~\cite{StoKen96, HaeAnd09, BacBla:Stochastic-Geometry-and-Wireless:09, BacBla:Stochastic-Geometry-and-WirelessApps:09}. With stochastic geometry, interferer locations are distributed according to a point process, often the homogeneous Poisson Point Process (PPP) for its tractability. PPPs have been used to model co-channel interference from macro cellular base stations \cite{BacKleLebZuy97,BacZuy97,Brown00,AndBacGan:A-Tractable-Approach-to-Coverage:11,DecMarVu:10submitted}, cross-tier interference from femtocells~\cite{ChaAnd:Uplink-Capacity-and-Interference:09, ChaKouAnd09, KimLeeHongFemto10}, co-channel interference in ad hoc networks \cite{HaeAnd09, BacBla:Stochastic-Geometry-and-Wireless:09, BacBla:Stochastic-Geometry-and-WirelessApps:09,WinPinShe:A-Mathematical-Theory-of-Network:09}, and as a generic source of interference~\cite{IloHat98,YanPet03,GuEvAnTin:2010}.

Modeling co-channel interference from other base stations as performed in \cite{BacKleLebZuy97,Brown00,MadResLiu:Carrier-to-Interference-Ratio:09,AndBacGan:A-Tractable-Approach-to-Coverage:11,DecMarVu:10submitted} is a good starting point for developing insights into heterogeneous network interference. In  \cite{BacKleLebZuy97} PPPs are used to model
various components of a telecommunications network including subscriber locations, base station locations, as well as network infrastructure leveraging results on Voronoi tessellation.
In \cite{Brown00,MadResLiu:Carrier-to-Interference-Ratio:09} a cellular system with PPP distribution of base stations, called a shotgun cellular system, is shown to lower bound a hexagonal cellular system in terms of certain performance metrics and to be a good approximation in the presence of shadow fading.
In \cite{AndBacGan:A-Tractable-Approach-to-Coverage:11}, a comprehensive analysis of a cellular system with a PPP is presented, where key system performance metrics like coverage probability and average rate are computed by averaging over all deployment scenarios and cell sizes. The approach in \cite{AndBacGan:A-Tractable-Approach-to-Coverage:11} is quite powerful, leading to closed form solutions for some special choices of parameters. In \cite{DecMarVu:10submitted} the outage and the handover probabilities in a PPP cellular network are derived for both random general slow fading and Rayleigh fading, where mobiles are attached to the base station that provides the best mean signal power. Recent work has considered extensions of \cite{Brown00,AndBacGan:A-Tractable-Approach-to-Coverage:11} to multi-tier networks. For example \cite{MadResLiu:Multi-Tier-Network-Performance:11} extends \cite{Brown00} to provide some results on the signal-to-interference ratio in multi-tier networks while \cite{DhiGanBacAnd:Ktier-HetNet:11} extends \cite{AndBacGan:A-Tractable-Approach-to-Coverage:11} to provide remarkable simple expressions for the success probability in interference limited multi-tier networks. From our perspective, the main drawbacks of the approach in \cite{BacKleLebZuy97,Brown00,MadResLiu:Carrier-to-Interference-Ratio:09,AndBacGan:A-Tractable-Approach-to-Coverage:11,DecMarVu:10submitted}  for application to existing systems is that performance is characterized for an entire system not for a given cell. As cellular networks are already built out, cellular providers will often want to know the performance they achieve in some given cells by adding additional infrastructure (and thus interference) in the rest of the network. It is also challenging to incorporate more complex kinds of heterogeneous infrastructure like fixed relays or distributed antennas into the signal and interference models.

Models for co-channel interference from PPPs in general wireless settings have been considered in prior work~\cite{IloHat98,YanPet03,WinPinShe:A-Mathematical-Theory-of-Network:09,GuEvAnTin:2010,HaeGan:Interference-in-Large-Wireless:10}. References \cite{IloHat98,YanPet03,GuEvAnTin:2010} derive models for the complex noise distribution. These models assume that the noise changes on a sample-by-sample basis and are suitable for deriving optimum signal processing algorithms, for example optimum detectors. Other work, see e.g. \cite{WinPinShe:A-Mathematical-Theory-of-Network:09,HaeGan:Interference-in-Large-Wireless:10} and the references therein, provide models for the noise power distribution. Most system-level analysis work \cite{BacKleLebZuy97,Brown00,MadResLiu:Carrier-to-Interference-Ratio:09,AndBacGan:A-Tractable-Approach-to-Coverage:11,DecMarVu:10submitted} focuses on the noise power distribution. They assume that the interference distribution conditioned on the noise power is Gaussian and thus characterize performance based on quantities like the signal-to-interference ratio or the signal-to-interference-plus-noise ratio, which are not appropriate for other noise distributions \cite{IloHat98,YanPet03,GuEvAnTin:2010}. This is reasonable because cellular systems tend to have structured transmissions in time and frequency, hence the interfering sets are likely to be constant over a coding block. Consequently we focus on characterizing and employing the noise power distribution in our work. We assume the full spectrum is reused among all tiers of nodes; work on the spectrum sharing can be found in \cite{HuaLauChe:Spectrum-sharing-between-cellular:09,LeeAndHon:Spectrum-Sharing-Transmission-Capacity:11}.


In this paper we propose a simplified interference model for heterogeneous networks called the hybrid approach. The key idea is to evaluate performance in a fixed-size circular cell considering co-channel interference outside a guard region and cross-tier interference from within the cell. Different sources of interference are modeled as marked Poisson point processes where the mark distribution includes a contribution from small-scale fading and large-scale fading. The fixed-size cell is inscribed inside the weighted Voronoi tessellation formed from the multiple sources of out-of-cell interference. The co-channel interference has two components: one that corresponds to the nearest interferer that is exactly twice the cell radius away (the boundary of the guard region), and one that corresponds to the other interferers (outside the guard region). An upper bound on the co-channel interference power distribution is characterized through its Laplace transform as a function of distance from the cell center for the case of downlink transmission, after lower bounding the size of the exclusion region. Performance is analyzed as a function of the ``relative'' distance from the receiver to the center of the cell. With suitable choice of cell radius as a function of density, the resulting analysis (in the case of a single tier of interferers) is invariant to density, making the fixed-cell a ``typical'' cell. 

One of the key advantages of the proposed model is that no explicit user association scheme needs to be specified. User association in heterogeneous cellular networks is an important topic of investigation and prior work has considered specific user assignment schemes \cite{Mukar-HetNet:11, JoSanXiaAnd:10submitted, Mukar-Allerton:11, DhiGanBacAnd:Ktier-HetNet:11}. Two metrics, namely the received signal quality and the cell traffic load, are usually considered for selecting the serving base station. In the hybrid model, users can be arbitrarily distributed in the cell of interest without requiring any assumption on the user association. Moreover, our new calculations still operate under the assumption that users connect to the closest base station. In the presence of shadowing, this may not be realistic, as closest base station association does not necessarily have a geometric interpretation.  In practice, however, the association may be carried out in different granularity as compared to the channel dynamics.
For instance, users may connect to the base station that provides the highest averaged received power (often with biasing), thus the effect of fading and shadowing becomes constant and the users connect to the closest base station on average.

To facilitate simplified calculations, we use approximations based on the Gamma distribution. We model the  fading component using the Gamma distribution as this incorporates single user and multiple user signaling strategies with Rayleigh fading as special cases, along with an approximation of the composite small-scale and log-normal fading distribution~\cite{Kos:Analytical-approach-to-performance:05}. We then approximate the interference distribution using the Gamma distribution with second order moment matching; the moments are finite because we use a nonsingular path-loss model. Using the Gamma approximation, we present simplified expressions for the success probability and the average rate, assuming that the system is interference limited, i.e. our calculations are based on the signal-to-interference ratio. Simulations show that the Gamma approximation provides a good fit over most of the cell, and facilitates lower complexity.

Our mathematical approach follows that of \cite{AndBacGan:A-Tractable-Approach-to-Coverage:11} and leverages basic results on PPPs \cite{BacBla:Stochastic-Geometry-and-Wireless:09,BacBla:Stochastic-Geometry-and-WirelessApps:09,HaeGan:Interference-in-Large-Wireless:10}. Compared with \cite{AndBacGan:A-Tractable-Approach-to-Coverage:11}, we consider a fixed-size cell inscribed in a weighted Voronoi tessellation, and do not average over possible Voronoi cells. We also do not calculate system-wide performance measures, rather we focus on  performance for a fixed cell of interest. We also demonstrate how to employ the proposed fixed cell analysis in a heterogeneous network consisting of mixtures of the different kinds of infrastructure. The advantage of our approach is that more complex types of communication and network topologies can be analyzed in a given target cell and key insights on the performance of advanced transmission techniques with heterogeneous interference using stochastic geometry can be obtained. Compared with our prior work reported in \cite{HeaKou:Modeling-Heterogeneous-Network:12}, in this paper we make the concept of the guard region more precise in the context of random cellular networks, avoiding the ad hoc calculation of the guard radius. As a byproduct, we also modified the calculations of the interference term to include a closest interferer that sits exactly at the edge of the guard region. In addition, we deal with more general fading distributions and use the Gamma approximation to simplify the calculation of the success probability and the ergodic rate. A benefit of our new model is that with a particular choice of the cell radius, the results are invariant with the base station density. Guard zones have been used in other work on interference models e.g.~\cite{HasAnd07,GuEvAnTin:2010}. In \cite{HasAnd07} they are used to evaluate the transmission capacity in contention-based ad hoc networks not cellular networks while in \cite{GuEvAnTin:2010} they are considered in the derivation of the baseband noise distribution but not the noise power.

\section{Mathematical Preliminaries} \label{sec:background}
In this paper we make frequent use of Gamma random variables. We use them to model the small-scale fading distribution, to approximate the product of the small-scale and log-normal fading distribution, and to approximate the interference power. In this section we provide some background on Gamma random variables to make the paper more accessible. Proofs are omitted where results easily follow or are well known in the literature.

\begin{definition}
A Gamma random variable with finite shape $k>0$ and finite scale $\theta>0$, denoted as $\Gamma[k,\theta]$ has probability distribution function
\begin{align}
f_X(x) & = ~ x^{k-1} \frac{e^{-x/\theta}}{\theta^k \Gamma(k)}
\end{align}
for $x \geq 0$. The cumulative distribution function is
\begin{align}
F_X(x) & = ~ \frac{\gamma(k, x\theta)}{\Gamma(k)},
\end{align}
where $\gamma(a,b) = \int_{0}^{b} t^{a-1} e^{-t} dt$ is the lower incomplete gamma function.
The first two moments and the variance are
\begin{align}
\bbE X & =  k \theta ~~~~\bbE X^2  =  k (1+k) \theta^2~~~~\mathrm{var} ~X  =  k \theta^2.
\end{align}
\end{definition}
The scale terminology comes from the fact that if $X$ is $\Gamma[k,\theta]$ then with scalar $\alpha>0$, $\alpha X$ is $\Gamma[k,\alpha \theta]$.

The Gamma distribution is relevant for wireless communication because it includes several channel models as a special case. For example if $Z$ is a circularly symmetric complex Gaussian with $\cN(0,1)$ (variance $1/2$ per dimension) then $X=|Z|^2$ is $\Gamma[1,1]$, which corresponds to Rayleigh channel assumption. If $X$ has a Chi-square distribution with $2N$ degrees of freedom, used in the analysis of diversity systems, then $X$ is also $\Gamma[N,2]$. If $X$ is Nakagami distributed~\cite{Nak:The-m-distribution:-A-general-formula:60} with parameters $\mu$ and $\omega$ then $Y=X^2$ has a $\Gamma[k,\theta]$ distribution with $k=\mu$ and $\theta = \omega / \mu$. Consequently the Gamma distribution can be used to represent common fading distributions.

Of relevance for computing rates, the expected value of the log of a Gamma random variable has a simple form as summarized in the following Lemma.
\begin{lemma}[Expected Log of a Gamma Random Variable] \label{lem:GammaCapacity}
If $X$ is $\Gamma[k,\theta]$ then $\bbE \ln X ~ = ~ \psi(k) + \ln(\theta)$ where $\psi(k)$ is the digamma function.
\end{lemma}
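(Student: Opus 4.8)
The plan is to compute $\bbE \ln X$ directly from the definition of expectation, and to recognize the resulting integral as a derivative of the Gamma function. First I would write
\[
\bbE \ln X = \int_0^\infty (\ln x) \, x^{k-1} \frac{e^{-x/\theta}}{\theta^k \Gamma(k)} \, dx .
\]
A convenient first step is to substitute $u = x/\theta$, so $x = \theta u$, $dx = \theta \, du$, and $\ln x = \ln u + \ln\theta$. The $\ln\theta$ piece factors out and, since the remaining density integrates to $1$, contributes exactly $\ln\theta$. This isolates the problem to showing that for a standard $\Gamma[k,1]$ variable $U$ we have $\bbE \ln U = \psi(k)$.

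For that reduced claim, the cleanest route is to differentiate under the integral sign. Recall $\Gamma(k) = \int_0^\infty u^{k-1} e^{-u} \, du$, and observe that $\frac{\partial}{\partial k} u^{k-1} = (\ln u)\, u^{k-1}$. Differentiating both sides with respect to $k$ (justified by the dominated-convergence / Leibniz rule, since the integrand and its $k$-derivative are dominated by integrable functions on compact $k$-intervals in $(0,\infty)$) gives
\[
\Gamma'(k) = \int_0^\infty (\ln u)\, u^{k-1} e^{-u} \, du .
\]
Dividing by $\Gamma(k)$ yields $\bbE \ln U = \Gamma'(k)/\Gamma(k) = \psi(k)$, where the last equality is the definition of the digamma function. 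Combining with the $\ln\theta$ term from the scaling step gives $\bbE \ln X = \psi(k) + \ln\theta$, as claimed.

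The only delicate point is the interchange of differentiation and integration; everything else is bookkeeping. That interchange is standard and can be justified by noting that for $k$ in any compact subinterval $[k_0, k_1] \subset (0,\infty)$, the derivative $(\ln u)\, u^{k-1} e^{-u}$ is bounded in absolute value by an integrable function (near $u=0$ by a constant times $u^{k_0 - 1}|\ln u|$, and near $u=\infty$ by a constant times $u^{k_1} e^{-u}$), so the Leibniz integral rule applies. Alternatively, one may avoid the analytic justification entirely by taking $\psi(k) \bydef \Gamma'(k)/\Gamma(k)$ and the integral representation of $\Gamma'$ as known facts from the literature, in which case the proof is essentially the substitution step alone.
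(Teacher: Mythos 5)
Your proof is correct, and the paper itself gives no proof of this lemma --- it is stated in the background section with the remark that ``proofs are omitted where results easily follow or are well known in the literature.'' Your derivation (scaling out $\ln\theta$ via the substitution $u = x/\theta$, then identifying $\int_0^\infty (\ln u)\,u^{k-1}e^{-u}\,du = \Gamma'(k)$ by differentiating under the integral sign) is exactly the standard argument the authors are implicitly relying on, and your justification of the interchange is sound.
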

The digamma function is implemented in many numerical packages or for large $x$ its asymptotic approximation can be used $\psi(x) = \ln(x) + 1/x + \Landau(1/x^2)$.

The Gamma distribution is also known as the Pearson type III distribution, and is one of a family of distributions used to model the empirical distribution functions of certain data~\cite{Sch:Methods-for-Modelling-and-Generating:77}.  Based on the first four moments, the data can be associated with a preferred distribution. As opposed to optimizing over the choice of the distribution, we use the Gamma distribution exclusively because it facilitates calculations and analysis.
We will approximate a given distribution with a Gamma distribution by matching the first and second order moments in what is commonly known as moment matching~\cite{Akh:The-classical-moment-problem:65}. The two parameters of the Gamma distribution can be found by setting the appropriate moments equal and simplifying. We summarize the result in the following Lemma.
\begin{lemma}[Gamma $2^{nd}$ Order Moment Match] \label{lem:GammaApproximation}
Consider a distribution with $\mu = \bbE X$, $\mu^{(2)} = \bbE X^2$, and variance $\sigma^2 = \mu^{(2)}-\mu^2$. Then the distribution $\Gamma[k,\theta)$ with same first and second order moments has parameters
\begin{equation}
k ~=~ \frac{\mu^2}{\sigma^2}
~~~~~
\theta ~=~ \frac{\sigma^2}{\mu}.
\end{equation}
\end{lemma}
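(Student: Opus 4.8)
The plan is to read off the moments of a Gamma variable from the Definition and then solve the resulting two-equation system. Recall that if $X \sim \Gamma[k,\theta]$ then $\bbE X = k\theta$ and $\mathrm{var}\, X = k\theta^2$. ``Matching the first and second order moments'' of the target distribution is equivalent to matching its mean $\mu$ and its variance $\sigma^2 = \mu^{(2)} - \mu^2$, since $\bbE X^2 = \mathrm{var}\,X + (\bbE X)^2$ and the map $(\bbE X,\mathrm{var}\,X)\mapsto(\bbE X,\bbE X^2)$ is a bijection. So I would impose the two conditions
\begin{align}
k\theta &= \mu, & k\theta^2 &= \sigma^2.
\end{align}

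Next I would solve this system. Dividing the second equation by the first (which is legitimate because $\mu>0$, as the target distribution is supported on $[0,\infty)$ and nondegenerate) gives $\theta = \sigma^2/\mu$. Substituting back into $k\theta = \mu$ yields $k = \mu/\theta = \mu^2/\sigma^2$. This is exactly the claimed pair. I would then note that the solution is unique and admissible: since $\mu>0$ and $\sigma^2>0$ for any nondegenerate distribution with finite second moment, both $k$ and $\theta$ are finite and strictly positive, so $\Gamma[k,\theta]$ with these parameters is a well-defined Gamma distribution in the sense of the Definition.

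There is essentially no obstacle here — the entire argument is a routine $2\times 2$ solve. The only point requiring a word of care is the nondegeneracy hypothesis implicit in writing $k = \mu^2/\sigma^2$: if $\sigma^2 = 0$ the target is a point mass and no Gamma distribution matches it, and if $\mu = 0$ the target is the point mass at $0$; in the application these cases do not arise because the nonsingular path-loss model guarantees $0<\mu<\infty$ and $0<\sigma^2<\infty$. I would close the proof by remarking that matching higher moments would in general overdetermine the two free parameters, which is precisely why the approximation is taken at second order.
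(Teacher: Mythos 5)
Your proof is correct and follows exactly the route the paper intends: the paper omits the proof as routine, noting only that the parameters are "found by setting the appropriate moments equal and simplifying," which is precisely your $k\theta=\mu$, $k\theta^2=\sigma^2$ solve. Your added remarks on nondegeneracy and uniqueness are fine but not needed.
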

Note that the shape parameter is scale invariant. For example, the Gamma approximation of $\alpha X$ would have shape $k=\mu^2/\sigma^2$ and scale $\theta = \alpha \sigma^2/\mu$.

We will need to deal with sums of independent Gamma random variables with different parameters. There are various closed-form solutions for the resulting distribution~\cite{Mos:The-distribution-of-the-sum-of-independent-gamma:85,Pro:On-sums-of-independent-gamma:89} (see also the references in the review article \cite{Nad:A-review-of-results-on-sums:08}). In this paper we choose the form in~\cite{Mos:The-distribution-of-the-sum-of-independent-gamma:85} as it gives a distribution that is a sum of scaled Gamma  distributions.
\begin{theorem}[Exact Sum of Gamma Random Variables] \label{thm:mos}
Suppose that $\{ Y_i \}_{i=1}^n$ are independent $\Gamma[k_i,\theta_k]$ distributed random variables. Then the probability distribution function of $Y = \sum_i Y_i$ can be expressed as
\begin{align}
f_Y(y) & = C \sum_{i=0}^\infty \frac{c_i}{\Gamma(\rho+i)\theta_{min}^{\rho+i}} y^{\rho+i-1} e^{-y/\theta_{min}} \label{eq:fullSumGamma},
\end{align}
where $\theta_{min} \bydef \min_i \theta_i$, $\rho = \sum_{i=1}^n k_i$, $C = \prod_{i=1}^n (\theta_{min}/\theta_i)^{k_i}$, and $c_i$ is found by recursion from
\begin{align}
c_{i+1} & =  \frac{1}{i+1} \sum_{m=1}^{i+1} m \gamma_m c_{i+1-m} \label{eq:cRecursion}
\end{align}
and $\gamma_m ~ = ~ \sum_{i=1}^n k_i m^{-1} (1-\theta_{min}/\theta_{i})^m$.
\end{theorem}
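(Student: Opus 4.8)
The plan is to stay entirely at the level of moment generating functions (MGFs) and to invert only at the end. Since $Y_i$ is $\Gamma[k_i,\theta_i]$ its MGF is $(1-\theta_i t)^{-k_i}$ for $t<1/\theta_i$, so by independence $M_Y(t)=\prod_{i=1}^n(1-\theta_i t)^{-k_i}$ for $t<1/\max_i\theta_i$, an interval containing $0$. The one nonobvious step is to re-expand this product around the factor $(1-\theta_{min}t)^{-1}$: introducing the single auxiliary variable $u\bydef(1-\theta_{min}t)^{-1}$, a direct check gives
\[
(1-\theta_i t)^{-k_i}=\Bigl(\tfrac{\theta_{min}}{\theta_i}\Bigr)^{k_i}(1-\theta_{min}t)^{-k_i}\bigl(1-u(1-\theta_{min}/\theta_i)\bigr)^{-k_i},
\]
and multiplying over $i$ collects the first factors into $C$ and the middle factors into $(1-\theta_{min}t)^{-\rho}$, leaving $M_Y(t)=C\,(1-\theta_{min}t)^{-\rho}\,h(u)$ with $h(u)\bydef\prod_{i=1}^n\bigl(1-u(1-\theta_{min}/\theta_i)\bigr)^{-k_i}$. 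The payoff is that $h$ depends on $t$ only through $u$, so it is an honest one-variable power series.

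Next I would expand $h$ in powers of $u$. Taking logarithms, expanding each $-\log\bigl(1-u(1-\theta_{min}/\theta_i)\bigr)$ as $\sum_{m\ge1}u^m(1-\theta_{min}/\theta_i)^m/m$, and interchanging the (finite outer, geometric inner) sums gives $\log h(u)=\sum_{m\ge1}\gamma_m u^m$ with exactly the $\gamma_m$ of the statement, so $h(u)=\exp\bigl(\sum_{m\ge1}\gamma_m u^m\bigr)=:\sum_{i\ge0}c_i u^i$ with $c_0=1$. Differentiating the relation $h'(u)=\bigl(\sum_{m\ge1}m\gamma_m u^{m-1}\bigr)h(u)$ and matching the coefficient of $u^i$ on each side gives precisely the recursion \eqref{eq:cRecursion}. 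Substituting $u=(1-\theta_{min}t)^{-1}$ back in yields $M_Y(t)=C\sum_{i\ge0}c_i(1-\theta_{min}t)^{-(\rho+i)}$, and since $(1-\theta_{min}t)^{-(\rho+i)}$ is the MGF of a $\Gamma[\rho+i,\theta_{min}]$ variable, the $i$-th summand is the MGF of the $i$-th term of \eqref{eq:fullSumGamma}.

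The last step, which I expect to be the only genuinely delicate one, is to justify inverting term by term --- that is, to show the function $g$ given by the right-hand side of \eqref{eq:fullSumGamma} really is the density of $Y$. First note that all $\gamma_m\ge0$, hence all $c_i\ge0$, and that the radius of convergence of $\sum_i c_i u^i$ is $\theta_{max}/(\theta_{max}-\theta_{min})>1$, which strictly exceeds the value of $u$ for every $t$ in a neighbourhood of $0$. For such $t$, Tonelli's theorem (all terms nonnegative) lets me interchange $\sum_i$ with $\int_0^\infty e^{ty}(\cdot)\,dy$, so $\int_0^\infty e^{ty}g(y)\,dy=C\sum_i c_i(1-\theta_{min}t)^{-(\rho+i)}=M_Y(t)$; taking $t=0$ shows $\int_0^\infty g=1$, and $g\ge0$ is clear, so $g$ is a probability density. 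Since $g$ and $Y$ then have the same (finite) MGF on an open interval about $0$, uniqueness of the MGF forces $f_Y=g$ almost everywhere, which is \eqref{eq:fullSumGamma}. The degenerate case $\theta_{max}=\theta_{min}$ (all $\theta_i$ equal), where $h\equiv1$ and the sum collapses to the single term $\Gamma[\rho,\theta_{min}]$, is immediate and can be handled separately.
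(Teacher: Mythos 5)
Your argument is correct. The paper itself gives no proof of Theorem~\ref{thm:mos}; it simply defers to the cited reference of Moschopoulos (1985), and your derivation is essentially a self-contained reconstruction of that reference's moment-generating-function argument: the factorization $(1-\theta_i t)^{-k_i} = (\theta_{min}/\theta_i)^{k_i}(1-\theta_{min}t)^{-k_i}\bigl(1-u(1-\theta_{min}/\theta_i)\bigr)^{-k_i}$ with $u=(1-\theta_{min}t)^{-1}$, the logarithmic expansion producing the $\gamma_m$, the coefficient-matching recursion for the $c_i$, and the identification of each summand with the MGF of a $\Gamma[\rho+i,\theta_{min}]$ density are exactly the steps of that proof. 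What you add beyond the reference is the explicit justification of term-by-term inversion --- nonnegativity of the $c_i$, the radius of convergence $\theta_{max}/(\theta_{max}-\theta_{min})>1$ covering all $t<1/\theta_{max}$, Tonelli, and uniqueness of the MGF on an open interval about the origin --- which is the step usually glossed over, and it is sound as written.
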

\begin{proof}
See~\cite{Mos:The-distribution-of-the-sum-of-independent-gamma:85}.
\end{proof}

For practical implementation, the infinite sum is truncated; see \cite{Mos:The-distribution-of-the-sum-of-independent-gamma:85} for bounds on the error. Mathematica code for computing the truncated distribution is found in \cite{AloAbdKav:Sum-of-gamma-variates-and-performance:01}.  In some cases, many terms need to be kept in the approximation to achieve an accurate result. Consequently we also pursue a moment-matched approximation of the sum distribution.

\begin{lemma}[Sum of Gammas $2^{nd}$ Order Moment Match]\label{lem:gammaApprox}
Suppose that $\{ Y_i \}$ are independent Gamma distributed random variables with parameters $k_i$ and $\theta_i$. The Gamma distribution $\Gamma[k_y,\theta_y]$ with the same first and second order moments has
\begin{align}
k_y & =  \frac{ \left( \sum_i k_i \theta_i \right)^2}{ \sum_i k_i \theta_i^2} \label{eq:thetay}
~~~\textrm{and}~~~\theta_y  =  \frac{ \sum_i k_i \theta_i^2}{ \sum_i k_i \theta_i}.
\end{align}
\end{lemma}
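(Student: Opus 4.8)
The plan is to compute the first two moments of the sum $Y = \sum_i Y_i$ directly from the moments of the individual $Y_i$, and then invoke Lemma~\ref{lem:GammaApproximation} (the basic Gamma $2^{nd}$ order moment match) to read off $k_y$ and $\theta_y$. First I would recall from the Definition that for each $i$, $\bbE Y_i = k_i \theta_i$ and $\mathrm{var}\, Y_i = k_i \theta_i^2$. By linearity of expectation, $\mu \bydef \bbE Y = \sum_i k_i \theta_i$. By independence of the $Y_i$, the variances add, so $\sigma^2 \bydef \mathrm{var}\, Y = \sum_i k_i \theta_i^2$.

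Next I would substitute these two quantities into the formulas of Lemma~\ref{lem:GammaApproximation}, namely $k = \mu^2/\sigma^2$ and $\theta = \sigma^2/\mu$. This gives immediately
\begin{align}
k_y = \frac{\left( \sum_i k_i \theta_i \right)^2}{\sum_i k_i \theta_i^2} \qquad \textrm{and} \qquad \theta_y = \frac{\sum_i k_i \theta_i^2}{\sum_i k_i \theta_i}, \nonumber
\end{align}
which is exactly the claimed expression. One should note in passing that $\mu^{(2)} = \bbE Y^2 = \sigma^2 + \mu^2$ is finite since each $k_i,\theta_i$ is finite and the sum is over finitely many terms, so the moment-matched $\Gamma[k_y,\theta_y]$ is well defined.

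There is essentially no obstacle here: the result is a one-line consequence of additivity of mean and (under independence) variance, followed by the already-established two-parameter moment match. The only point requiring the slightest care is making sure independence is used in the right place — it is needed for the variances to add but not for the means — and that the index set is finite so that all the sums converge; both are hypotheses of the Lemma. I would present the proof in two or three sentences and not belabor it.
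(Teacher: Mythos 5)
Your proof is correct and is precisely the argument the paper intends (the paper omits it as an easy consequence, per its remark that proofs are omitted where results easily follow): add the means by linearity, add the variances by independence, and plug into Lemma~\ref{lem:GammaApproximation}. Nothing further is needed.
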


Bounds on the maximum error obtained through a moment approximation are known~\cite{Akh:The-classical-moment-problem:65} and may be used to estimate the maximum error in the cumulative distribution functions of the Gamma approximation. In related work
\cite{HeaWuKwo:Multiuser-MIMO-in-Distributed:11} we show that the approximation outperforms the truncated expression in (\ref{eq:fullSumGamma}) unless a large number of terms are kept, and therefore conclude that the approximation is reasonable. The approximation is exact in some cases, e.g. if all the shape values are identical (in this case $\theta_i = \theta$ then $k_y = \sum_i k_i = \rho$, $\theta_y = \theta=\theta_{min}$ and $c_i=0$).

In wireless systems, the zero-mean log-normal distribution is used to model large-scale fluctuations in the received signal power. The distribution is $
f_X(x) ~ = ~ \frac{1}{x \sqrt{2 \pi \sigma^2}} e^{-\frac{(\ln x )^2}{2 \sigma^2}}$ where $\sigma$ is the shadow standard deviation, often given in $dB$ where $\sigma_{dB} ~ = ~ 8.686 \sigma$. Typical values of $\sigma_{dB}$ are between $3 dB$ and $10 dB$ for example in 3GPP \cite{3GPP:Further-advancements-for-E-UTRA:10,3GPP:CoMP-LTE-PHY-aspects:11}.

Composite fading channels are composed of a contribution from both small-scale fading and large-scale fading. Consider the random variable for the product distribution $H L$ where $H$ is $\Gamma[k_h, \theta_h]$ and $L$ is log-normal with parameter $\sigma$. Related work has shown that the product distribution can be reasonably modeled using a Gamma distribution over a reasonable range of $\sigma_{dB}$ \cite{Kos:Analytical-approach-to-performance:05} by matching the first and second central moments.

\begin{lemma}[Gamma Approximation of Product Distribution]\label{lem:gammaComposite}
Consider the random variable $P H L$ where $P$ is a constant, $H$ is $\Gamma[k_h, \theta_h]$ and $L$ is log-normal with variance $\sigma$. The Gamma random variable with $\Gamma[k_p,\theta_p]$ with the same first-order and second-order moments has
\begin{align}
k_p &  =  \frac{ 1 }{(1/k_h+1)  e^{\sigma^2} - 1} \label{eq:kp} \\
\theta_p & =  (1+k_h)  \theta_h e^{3\sigma^2 /2} -  k_h \theta_h  e^{\sigma^2 / 2}. \label{eq:thetap}
\end{align}
\end{lemma}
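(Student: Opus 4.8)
The plan is to reduce the claim to Lemma~\ref{lem:GammaApproximation} by computing the first two moments of $X = PHL$ in closed form. First I would write down the ingredient moments. From the definition of the Gamma distribution, $\bbE H = k_h\theta_h$ and $\bbE H^2 = k_h(1+k_h)\theta_h^2$. For the log-normal factor, since $\ln L$ is $\cN(0,\sigma^2)$, completing the square in the Gaussian integral gives $\bbE L^n = e^{n^2\sigma^2/2}$ for every $n$; in particular $\bbE L = e^{\sigma^2/2}$ and $\bbE L^2 = e^{2\sigma^2}$.

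Next, using that $P$ is deterministic and $H$ and $L$ are independent, I would multiply moments: $\mu := \bbE X = P\,k_h\theta_h\,e^{\sigma^2/2}$ and $\mu^{(2)} := \bbE X^2 = P^2 k_h(1+k_h)\theta_h^2 e^{2\sigma^2}$, so that $\sigma_X^2 := \mu^{(2)} - \mu^2 = P^2\theta_h^2\big(k_h(1+k_h)e^{2\sigma^2} - k_h^2 e^{\sigma^2}\big)$.

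Finally I would apply Lemma~\ref{lem:GammaApproximation}, which says the moment-matched Gamma has $k_p = \mu^2/\sigma_X^2$ and $\theta_p = \sigma_X^2/\mu$. For $k_p$, the common factors $P^2\theta_h^2$ cancel and one power of $k_h$ cancels between numerator and denominator; dividing top and bottom by $k_h e^{\sigma^2}$ then yields $k_p = 1/\big((1/k_h+1)e^{\sigma^2}-1\big)$, which is manifestly scale-invariant (independent of $P$ and $\theta_h$) as the remark after Lemma~\ref{lem:GammaApproximation} predicts. For $\theta_p$, dividing $\sigma_X^2$ by $\mu$ cancels one factor each of $P$, $\theta_h$ and $k_h$, and collecting the exponentials via $e^{2\sigma^2}/e^{\sigma^2/2} = e^{3\sigma^2/2}$ and $e^{\sigma^2}/e^{\sigma^2/2} = e^{\sigma^2/2}$ gives $\theta_p = P\theta_h\big((1+k_h)e^{3\sigma^2/2} - k_h e^{\sigma^2/2}\big)$, i.e. the stated expression (with $P$ absorbed into $\theta_h$, consistent with $\theta_p$ scaling linearly in $P$).

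There is no real obstacle here: the argument is entirely a moment computation plus the bookkeeping of the log-normal moments and the exponent arithmetic. The only point that warrants a line of care is the positivity of $\sigma_X^2$, equivalently that $(1/k_h+1)e^{\sigma^2} > 1$; this holds for all $k_h>0$ and $\sigma>0$, so $k_p>0$ and $\theta_p>0$ and the approximating Gamma distribution is well defined.
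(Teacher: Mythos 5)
Your computation is correct and is exactly the moment-matching argument the paper intends (the proof is omitted there as routine): independence gives $\bbE (PHL)^n = P^n\,\bbE H^n\,\bbE L^n$ with $\bbE L^n = e^{n^2\sigma^2/2}$, and Lemma~\ref{lem:GammaApproximation} then produces \eqref{eq:kp} and \eqref{eq:thetap}. Your side remark that the derived scale is really $P\theta_h\bigl((1+k_h)e^{3\sigma^2/2}-k_h e^{\sigma^2/2}\bigr)$, with the constant $P$ absorbed, is the right reading and is consistent with the paper's later statement that $S(r)$ is $\Gamma[k_p, P(r)\theta_p]$ by the scaling property of Gamma random variables.
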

Because of its flexibility for modeling various kinds of small-scale fading, and composite small-scale large-scale fading, we consider general Gamma distributed fading in this paper.

\section{Downlink Network Model} \label{sec:systemModel}

In the classic model for cellular systems in  \figref{fig:cellular_geometry}(a), base stations are located at the centers of hexagons in a hexagonal tessellation and interference is computed in a fixed cell from multiple tiers of interferers. The hexagonal model requires simulation of multiple tiers of interferers and makes analysis difficult. In the stochastic geometry  model for cellular systems illustrated in  \figref{fig:cellular_geometry}(b), base stations positions follow a PPP and cells are derived from the Voronoi tessellation. However, in this model, it is difficult to study the performance at specific locations, such as the cell edge, for cell sizes are random and performance metrics are computed in an aggregate sense accounting for all base station distributions.

Our proposed model, which we call the \textit{hybrid approach}, is illustrated in \figref{fig:cellular_geometry}(c). We consider a typical cell of fixed shape and size, nominally a ball with radius $R_c$ which inscribes a Voronoi cell. A ball is assumed to simplify the analysis.  The inscribing ball does miss some badly covered areas at the cell edge, thus it does not fully characterize performance in an entire Voronoi cell. The base station locations outside of the fixed cell are modeled according to a PPP and the interference becomes a shot-noise process \cite{LowTei90}. A guard region of radius $R_g$ from the cell edge is imposed around the fixed cell in which no other transmitters can occupy. The role of the guard region is to guarantee that mobile users even at the edge of the typical cell receive the strongest signal power from the typical base station when neglecting channel fading, which also serves as the rule to systematically determine $R_g$. In the homogeneous network when the transmission power of all base stations is uniform, $R_g=R_c$. In the heterogeneous case, the expression of $R_g$ is more complicated and is discussed in \secref{sec:extensionHetInf}.

We also introduce the concept of the dominant interferer in the hybrid model, which can be better explained with one observation from the stochastic model. Intuitively, we can regard the fixed cell in the hybrid approach as the inscribing ball of a Voronoi cell in the stochastic geometry model, though the size of the inscribing ball is random in the latter case. In \figref{fig:cellular_geometry}(b), the size of the equivalent guard region is determined by the nearest neighboring base station, i.e. the dominant interferer locating at its outer boundary. In short in the stochastic geometry model there always exists a dominant interferer locating at the outer boundary of the guard region. Hence, in our proposed hybrid model, besides the PPP interferers outside the guard region, we also assume one dominant interferer uniformly distributed at the edge of the guard region in the homogeneous network. In addition, given $R_g$, the location of the dominant interferer is independent of those of all other base stations. This idea is extended to the heterogeneous case in \secref{sec:extensionHetInf}.

The choice of $R_c$ depends on the exact application of the model. It could simply be fixed if it is desired to analyze the performance of a particular cell in a deployment. Alternatively, if it is desired to analyze a ``typical'' fixed cell, the radius could be chosen to be a function of the density. For example, if the cell radius corresponds to that of an equivalent PPP model of base stations with density $\lambda$, then $R_c = \frac{1}{4\sqrt{\lambda}}$. Then the cell radius shrinks as the network becomes more dense. Selecting the radius for the heterogeneous case is a bit more complex and is discussed in  \secref{sec:extensionHetInf}.

\begin{figure} [!ht]
\centerline{
\includegraphics[width=1\columnwidth]{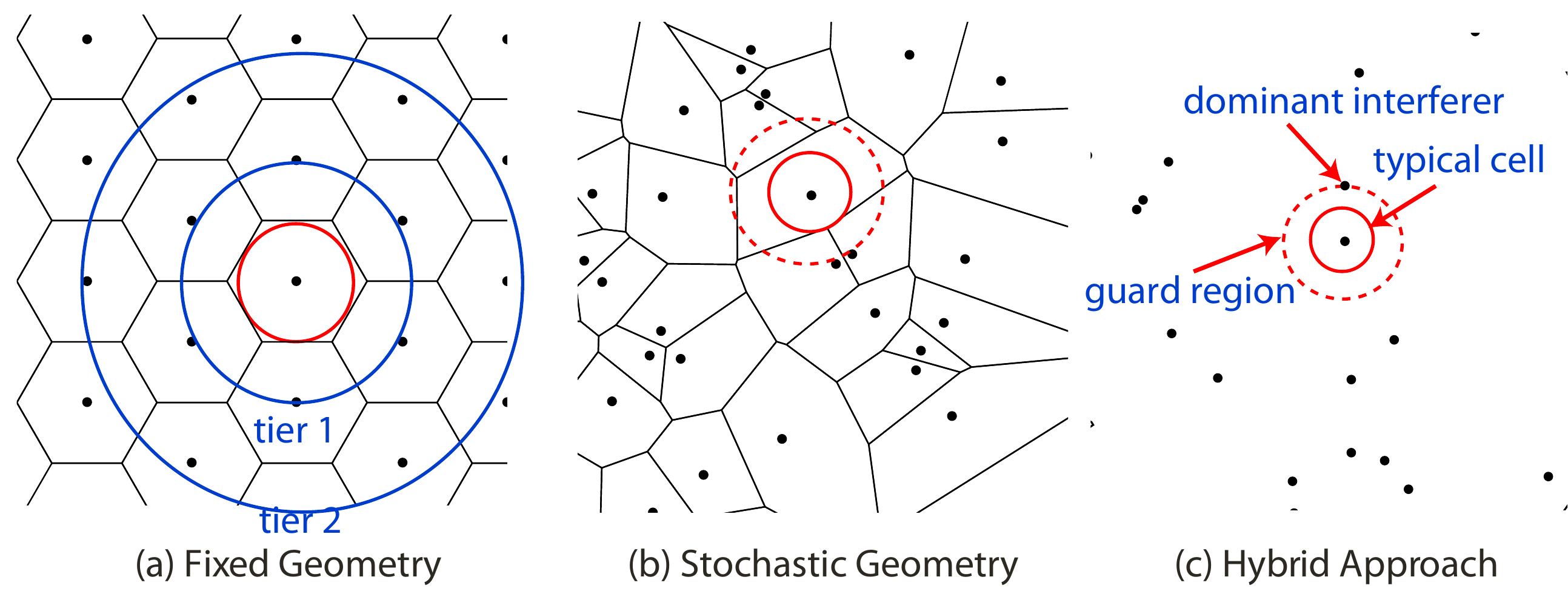}
}
\caption{Models for cellular communication. (a) The common fixed geometry model with hexagonal cells and multiple tiers of interference. (b) A stochastic geometric model where all base stations are distributed according to some 2D random process. (c) The proposed hybrid approach where there is a \textit{fixed cell} of a fixed size surround by base stations distributed according to some 2D random process, with an exclusion region around the cell and a dominant interferer at the boundary of the guard region.
}
\label{fig:cellular_geometry}
\end{figure}

Performance is evaluated within the fixed cell accounting for interference sources outside the guard region (including the dominant interferer at the edge). To make the calculations concrete, in this paper we focus on the downlink. We consider the received signal power at distance $r=\beta R_c$ $(0<\beta\le1)$ from the transmitter presumably located at the cell center. Thanks to the isotropic property of the circular cell, without loss of generality, we fix the receiver location at $(r,0)$ in the following discussion. Extension to distributed antennas, e.g. requires a more complex signal model. Let the received signal power be
\begin{align}
S(r) & =  P(r) L H, \label{eq:signalmodel}
\end{align}
where $P(r) = P_s/\ell(r)$ is the distant-dependent average received signal power with $P_s$ being the transmit power, $L$  a random variable corresponding to the large-scale fading power usually log-normal, and $H$ is a random variable corresponding to the small-scale fading power. We assume that a non-singular path-loss model is used with
\begin{align}
\ell(r) & =  C \max(d_0,r)^n, \label{eq:pathlossFixed}
\end{align}
where $n>2$ is the path-loss exponent, $d_0>0$ is the reference distance, and $C>0$ is a constant.
In addition to resulting in finite interference moments, \eqref{eq:pathlossFixed} takes into account the realistic RF design constraints on the maximum received power and is often assumed in a standards based channel model like 3GPP LTE Advanced \cite{3GPP:Further-advancements-for-E-UTRA:10}.

We consider the power due to small-scale fading as occurring after processing at the receiver, e.g. diversity combining. Furthermore,  we assume that the transmit strategy is independent of the strategies in the interfering cells and no adaptive power control is employed. We model the small-scale fading power $H$ as a Gamma distributed random variable with $\Gamma[k_h, \theta_h]$. The Gamma distribution allows us to model several different transmission techniques.  For example, in conventional Rayleigh fading $H$ is  $\Gamma[1,1]$ while Rayleigh fading with $\Nr$ receive antennas and maximum ratio combining $H$ is $\Gamma[\Nr,1]$. Maximum ratio transmission with Rayleigh fading, $\Nt$ transmit antennas, and $\Nr=1$ receive antennas would have $H$ is  $\Gamma[\Nt,1]$. With multiuser MIMO (MU MIMO) with $\Nt$ transmit antennas, $U \leq \Nt$ active users, and zero-forcing precoding then $H$ is $\Gamma[\Nt-U+1, 1/U]$ where the $1/U$ follows from splitting the power among different users.

We model the large-scale fading contribution $L$ as log-normal distributed with parameter $\sigma^2$. The large-scale fading accounts for effects of shadowing due to large objects in the environment and essentially accounts for error in the fit of the log-distance path-loss model. Because of the difficulty in dealing with composite distributions, we approximate the random variable $H L$ with another Gamma random variable with the same first and second order moments as described in Lemma \ref{lem:gammaComposite} with distribution $\Gamma[k_p, \theta_p]$ where $k_p$ is found in \eqref{eq:kp} and $\theta_p$ in \eqref{eq:thetap}. Note that from the scaling properties of Gamma random variables, with this approximation, the received signal power $S(r)$ is equivalently $\Gamma[k_p, P(r) \theta_p]$.

\section{Homogeneous Network Interference} \label{sec:extensionHomInf}

In this section we develop a model for homogeneous network interference, i.e. interference comes from a single kind of interferer such as a macrocell. We model the interferer locations according to a marked point process, where the marks correspond to the channel between the interferers and the target receiver. Specifically we consider a PPP $\Phi$ with density $\lambda$ and marks modeled according to the signal power distribution. For interferer $k$, $L_k$ follows a log-normal distribution with $\sigma$ while the small-scale fading distribution $G_k$ is $\Gamma[k_g, \theta_g]$. The transmit power of the interferer is denoted by $P_g$. The transmit power is fixed for all interferers and the actions of each interferer are independent of the other interferers. Note that the parameters of the interference mark distribution are usually different than the signal channel even when the same transmission scheme is employed in all cells. For example, with single antenna transmission and conventional Rayleigh fading $G_k$ is  $\Gamma[1,1]$ for any $\Nr$ at the receiver, while with multiuser MIMO zero-forcing beamforming with multiple transmit antennas serving $U \leq \Nt$ users $G_k$ is $\Gamma[U,1]$.

\begin{figure} [!ht]
\centerline{
\includegraphics[width=0.55\columnwidth]{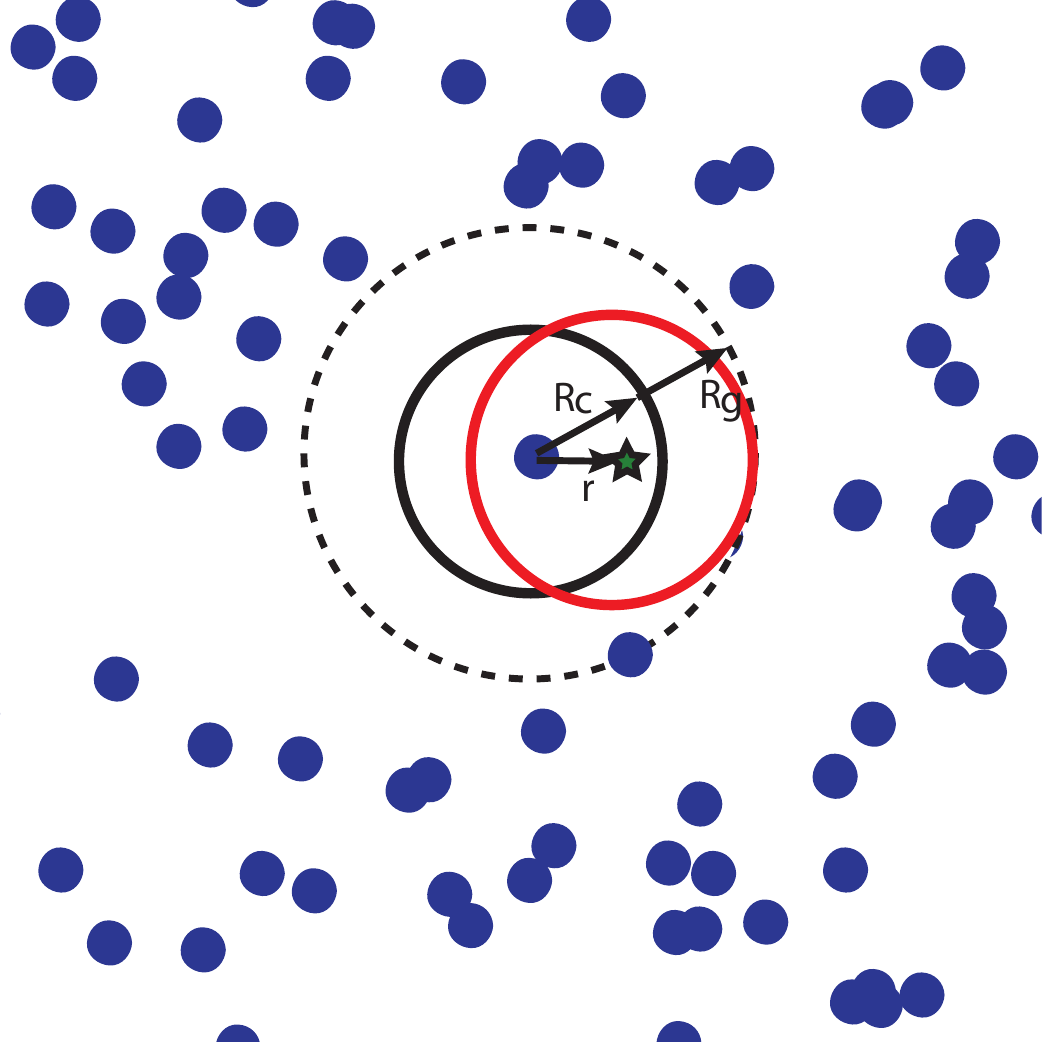}
}
\caption{Interference power calculation at a point $r$ away from the transmitter. An upper bound is considered as the interference is not excluded uniformly around the receiver.
}
\label{fig:interference_distribution_guard}
\end{figure}
We aim at computing the interference power received at $(r,0)$ as illustrated in \figref{fig:interference_distribution_guard}. In the homogeneous network, as mentioned in the previous section, $R_g=R_c=\frac{1}{4\sqrt{\lambda}}$.

The interference $I$ comes from two independent parts: $I_0$ is the interference from the dominant interferer at the guard region edge and $I_1$ is the interference from all other base stations which form a PPP outside the guard region, i.e.
\begin{align}
I(r) & =  I_0(r)+I_1(r).
\end{align}

Using the law of cosines, $I_0$ can be computed as
\begin{align}
I_0(r) & =  \frac{G_0L_0P_g}{\ell(R)},\label{eq:rxInterference1}
\end{align}
where $R=(4+4\beta\cos \eta+\beta^2)^{0.5}R_c$, $\eta$ is a uniform random variable on $(0, 2\pi]$, and $r=\beta R_c$.

To compute $I_1$, unfortunately, the exclusion distance to the nearest interferer is asymmetric, since the distance to the closest edge of the circle is $R_c+R_g-r$ while the distance to the furthest edge is $R_c+R_g+r$. The asymmetric property renders the exact solution generally difficult to obtain except for the special case when $n=4$.  To avoid the dependence on the location of the interference field we pursue an upper bound on the interference power, as illustrated \figref{fig:interference_distribution_guard}.  Specifically, we consider the interference contribution in a ball of radius $R_c+R_g -r$ around the received signal denoted $B(R_c+R_g-r)$. This results is an upper bound on the aggregate interference due to the fact that less area is excluded from the calculation. Then the receiver is at the center of the reduced interference region, which we call the ``small ball'' approximation. Under this assumption we write the received power as
\begin{align}
I_1(r) & =  \sum_{k \in \Phi \setminus  B(R_c+R_g-r)} \frac{G_k L_k P_g }{\ell( R_k)}. \label{eq:rxInterference}
\end{align}
We use the same non-singular path-loss model in \eqref{eq:pathlossFixed}  for the transmit signal power. Other non-singular models and tradeoffs between different models are discussed in \cite{BacBla:Stochastic-Geometry-and-Wireless:09,HaeGan:Interference-in-Large-Wireless:10}. As long as there is a guard region with $R_g>0$ a singular path-loss model could be employed without changing the results. In our calculations we assume that $R_c+R_g-r>d_0$ to simplify the exposition. To simplify computation in general cases, we also apply the small ball approximation to $I_0$ to obtain
\begin{align}
I_0(r) & =  \frac{G_0L_0P_g}{\ell(R_c+R_g-r)} \label{eq:rxInterference1appro}
\end{align}
with some abuse of notation.

We will characterize the distribution of the random variable $I(r)$ in terms of its Laplace transform, along the lines of \cite{AndBacGan:A-Tractable-Approach-to-Coverage:11}. When the interference marks $G_k L_k$ follow an arbitrary but identical distribution for all $k$ the Laplace transform is given by
\begin{align}
\mathcal{L}_{I(r)}(s) & =  \bbE_{I_0(r)}\left[e^{-s I_0(r)}\right] \bbE_{I_1(r)}\left[e^{-s I_1(r)}\right].
\end{align}
The first term, $\bbE_{I_0(r)}\left[e^{-s I_0(r)}\right]$, can be directly computed through integral given the distribution of $G_0$ and $L_0$. The second term, $\bbE_{I_1(r)}\left[e^{-s I_1(r)}\right]$, can be computed as
\begin{align}
\mathcal{L}_{I_1(r)}(s) & =  \bbE_{I_1(r)}\left[e^{-s I_1(r)}\right] \nonumber \\
& \approx  \bbE_{\Phi, G_k, L_k}\left(e^{-s\sum_{k\in  \Phi \setminus  B(R_c+R_g-r) } \frac{P_g G_kL_k}{\ell(R_k)}}\right)  \nonumber \\
& =  \bbE_{\Phi, G_k, L_k}\left(\prod_{k\in \Phi \setminus  B(R_c+R_g-r) } e^{-s \frac{P_g G_kL_k}{\ell(R_k)}}\right)  \nonumber \\
& \stackrel{(a)} =  \bbE_{\Phi}\left(\prod_{k\in \Phi \setminus  B(R_c+R_g-r) } \bbE_{G,L} \left(e^{-s \frac{P_g GL}{\ell(R_k)}}\right) \right)  \nonumber\\
& \stackrel{(b)} =   e^{-2\pi\lambda \int_{R_c+R_g -r}^{\infty} \left(1 - \bbE_{G,L}\left(e^{-sP_gGLC^{-1}v^{-n}}\right)  \right) v  dv } \nonumber \\
&= e^{\lambda\pi (R_c+R_g -r)^2 ~-~ \frac{2\pi\lambda (sP_g)^{\frac{2}{n}}}{nC^{2/n}} F_{n,R_c+R_g -r,C}(s)}\label{eq:laplaceInterferenceHet}
\end{align}
where $(a)$ follows from the i.i.d. distribution of $G_k L_k$ and its further independence from the point process $\Phi$, $(b)$ follows assuming that $R_c+R_g-r > d_0$ and using the probability generating functional of the PPP \cite{StoKen96}, $Z = G_k L_k P_g$, and
\begin{align}
F_{n,x,C}(s)  &= \bbE_z  z^{\frac{2}{n}} \left[\Gamma\left(-\frac{2}{n},sC^{-1}z x^{-n}\right) -\Gamma\left(-\frac{2}{n}\right)\right].
\end{align}
In some cases the expectation can be further evaluated, e.g. when $Z \sim \Gamma[k,\theta]$ then
$F_{n,x,C}(s)  =  \frac{(sC^{-1} x^{-n})^{-\frac{2}{n}-k} n \theta^{-k}}{2+k n} {}_2F_1 \left(k, k+\frac{2}{n},1+k+\frac{2}{n},-\frac{1}{sC^{-1} x^{-n} \theta} \right)$ $- \theta^{2/n}\mathcal{B}\left(k+\frac{2}{n},-\frac{2}{n}\right)$
where $\mathcal{B}(x,y)$ is the Beta Euler function and ${}_2F_1$ is the Gauss hypergeometric function.

%

\section{Heterogeneous Network Interference} \label{sec:extensionHetInf}

In this section we extend the proposed interference model to the case of a heterogeneous network, where the macrocellular network is complemented with other kinds of infrastructure including low-power nodes like small-cells, femtocells, fixed relays, and distributed antennas. We consider two different interference scenarios as illustrated in \figref{fig:cellularHetAndCross}. The scenario in \figref{fig:cellularHetAndCross}(a) corresponds to multiple kinds of out-of-cell interference. For example, there might be interference from both high power base stations and low power distributed antenna system transmission points. Since the transmission power of different tiers is heterogeneous, the guard regions associated with different processes may be different. Moreover, the key concepts of the hybrid model, such as the guard region and the dominant interferer are also extended to this case. The scenario in \figref{fig:cellularHetAndCross}(b) corresponds to cross-tier interference from a second tier of low-power nodes, e.g. small-cells or femtocells or uncoordinated transmission points. The main difference in this case is that there is interference within the cell so effectively the interference contribution is a constant since it is spatially invariant. There may still be a guard radius in this case but it is likely to be small.

\begin{figure} [!ht]
\centerline{
\includegraphics[width=0.92\columnwidth]{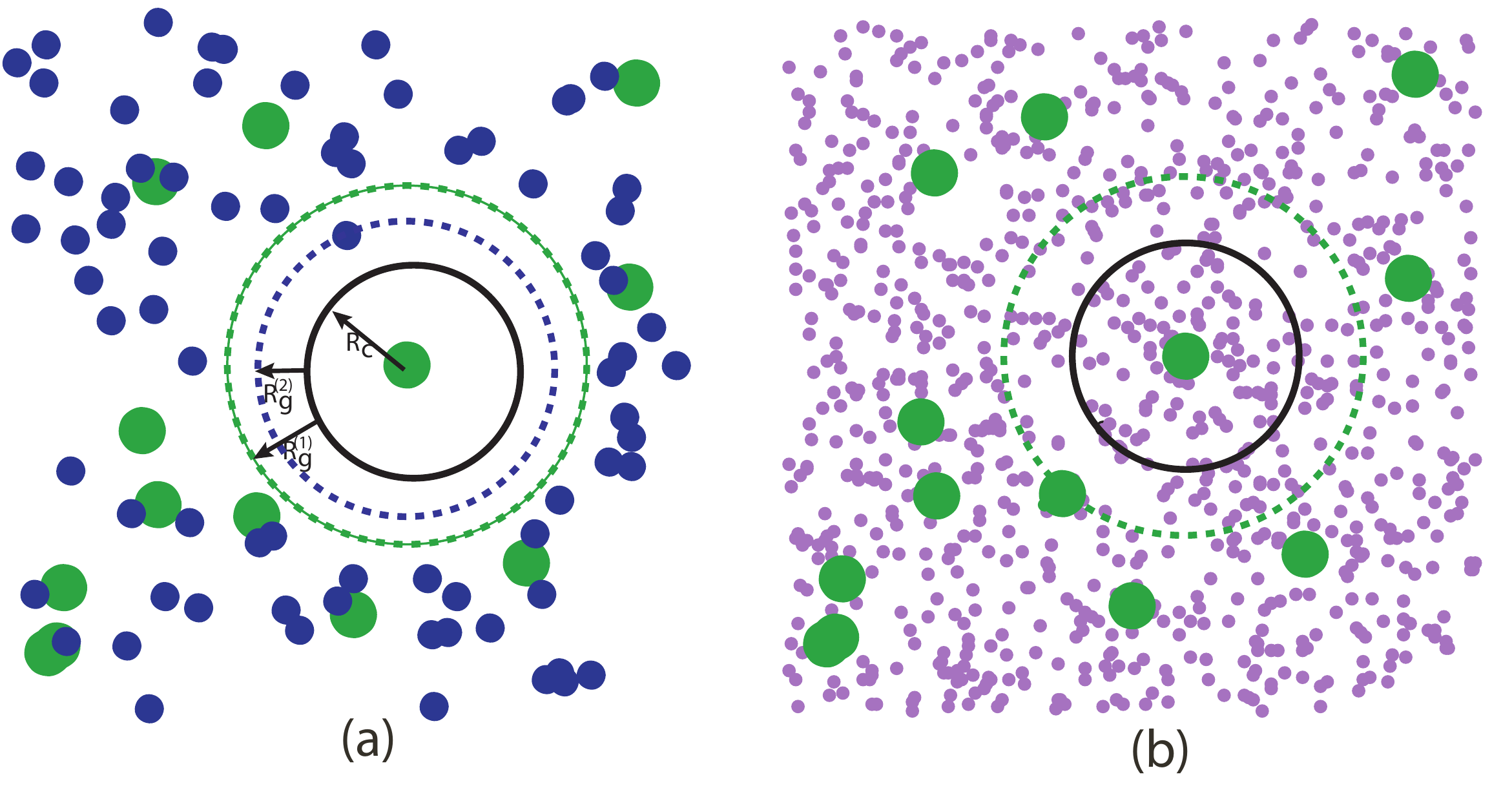}
}
\caption{Interference in a heterogeneous network from multiple kinds of infrastructure. (a) Heterogeneous out-of-cell interference. (b) Homogeneous interference and cross-tier interference from low power nodes.
}
\label{fig:cellularHetAndCross}
\end{figure}

\subsection{Heterogeneous Out-of-Cell Interference} \label{sec:extensionHetOut}

Suppose that there are $M$ different kinds of infrastructure deployed in a homogeneous way through the network. A more complex model would take non-uniformity or clustering into account \cite{AH2012submit:Clustering}, but this is beyond the scope of this work. Each interference source can be modeled as a marked PPP with marks corresponding to the composite fading distribution distributed as $\Gamma[k_m, \theta_m]$ where $m=1,\ldots,M$. The path-loss function is the same for each process. The transmitting power is given by $P_m$ and the transmitter density by $\lambda_m$. Consequently, each process is parameterized by $(k_m, \theta_m, P_m, \lambda_m)$.

Since the transmission power of each tier is different, the guard region radius for each tier $R_g^{(i)}$ is different. We assume that the base station associated with the fixed cell belongs to tier 1. Given $R_c$, the radius of the guard region for tier $i$, denoted as $R_g^{(i)}$, is
\begin{align}
R_g^{(i)}=(a_i-1) R_c, \label{eq:rchet}
\end{align}
where $a_i=1+\left(\frac{P_i}{P_1}\right)^{\frac{1}{n}}$.
It can be shown that this configuration guarantees that the edge of fixed cell is always covered by its associated base station, neglecting channel fading. Each tier of transmitters forms a homogeneous PPP of density $\lambda_i$ outside the guard region.

To analyze the performance of a ``typical'' cell, we now let $R_c$ be proportional to the average radius of the inscribing ball of the typical Voronoi cell in the equivalent stochastic geometry model as we did in the homogeneous case. In that way, the analysis is invariant with the scaling of the base station densities. Given $\lambda_i$ and $P_i$, the average radius of the inscribing ball of the typical Voronoi cell can be computed as
\begin{align}
\frac{1}{2\sqrt{\sum_{i=1}^{M}a_i^2\lambda_i}}.\label{eq:rghet}
\end{align}
We denote $\lambda=\sum_{i=1}^{M}a_i^2\lambda_i$, and let $R_c=\frac{1}{2\sqrt{\sum_{i=1}^{M}a_i^2\lambda_i}}$ in the following part.

We also extend the concept of the dominant interferer to the heterogeneous model. However, this case becomes more complex than the homogeneous case: the dominant interferer belongs to tier $i$ and locates at the edge of tier $i$'s guard region with probability $\frac{a_i^2 \lambda_i}{\lambda}$.

In the $M$-tier network, the total interference is made up of $M+1$: $I_0$ from the dominant interferer; $I_i$ $(1<i\le M)$ from the base stations outside the guard region of tier $i$, which is shot-field noise. Furthermore, $I_i$ $(0<i\le M)$ are independent. The equation for the total interference under the small ball approximation is given by
\begin{align}
J_{\mathrm{tot}}(r) & =  \sum_{m=1}^M \sum_{k \in \Phi_m \setminus  B(R_c+R^{(m)}_g-r)} \frac{G^{(m)}_k L^{(m)}_k P_m }{\ell( R^{(m)}_k)}+I_0.  \label{eq:sumInfHetTotal}
\end{align}

From the independence of the $M$ PPPs, the Laplace transform of the heterogeneous out-of-cell interference is
\begin{align}
\mathcal{L}_{J}(s) &= \prod_{m=0}^M
\mathcal {L}_{I_m(r)}(s), \label{eq:laplaceInterferenceNewCumulative}
\end{align}
with $\mathcal {L}_{I_m(r)}(s)$ the Laplace transform of $I_m(r)$.

By \eqref{eq:laplaceInterferenceHet}, it is easy to find that the Laplace transform of $I_i$ $(1<i\le M)$ is given by
\begin{align}
\exp\left[{\lambda\pi (R_c+R_g^{(i)} -r)^2 ~- \frac{2\pi\lambda (sP_i)^{\frac{2}{n}}}{nC^{2/n}} F_{n,R_c+R_g^{(i)} -r,C}(s)}\right].
\end{align}

\subsection{Cross-Tier Interference} \label{sec:extensionHetCross}
Certain kinds of infrastructure, like femtocells, are not associated with the macro base station. Rather they form another tier of nodes and create cross-tier interference. We propose to use the same stochastic geometry framework to model the interference with a main difference in how the notion of guard zone is defined. Let $B_{cross}$ denote an exclusion region around the desired receiver with radius $R_{cross}$. This region might be quite small, just a few meters radius, and is designed to avoid the case where the low-power node is co-located with the target receiver. Let us suppose that the cross-tier interference is associated with a PPP given by $\Phi_{cross}$, the  transmitting power is given by $P_{cross}$, and the transmitter density by $\lambda_{cross}$. The received interference power is given by
\begin{align}
I_{cross} & =  \sum_{k \in \Phi_{cross} \setminus  B_{cross}} \frac{G_k L_k P_{cross} }{\ell( R_k)} \label{eq:rxInterferenceCross},
\end{align}
which does not depend on $r$ due to the shift invariance property of the PPP. Consequently cross-tier interference creates a constant interference that is independent of the mobile receiver location.

\section{Approximating the Interference Distribution Using the Gamma Distribution} \label{sec:interference}

The expressions in \secref{sec:extensionHomInf} and \secref{sec:extensionHetInf} provide a characterization of the distribution of the interference terms in the proposed hybrid approach. Unfortunately, the distributions are characterized in terms of the Laplace transform and further simplifications exist only for special choices of the parameters and mark distribution. Consequently, in this section, we pursue an approximation of the distribution of the interference using the Gamma distribution. First we review the calculations for the Gamma approximation of the interference terms with a single and multiple interferers. Then we use these calculations to parameterize the Gamma approximation of the interference.

\subsection{Moments of the Interference Process}

The moments for interference distribution can be computed along the lines of \cite[Equations 2.19 and 2.21]{BacBla:Stochastic-Geometry-and-Wireless:09}. We only need the first two moments for the Gamma approximation. Let $Z = G_k L_k P_g $ denote the random variable corresponding to the mark distribution for the case of a homogeneous interference source.

For general cases, we use the approximation expression as in (\ref{eq:rxInterference}) and (\ref{eq:rxInterference1appro}) to derive the following proposition.
\begin{proposition}[Moments of Homogeneous Interference]\label{prop:momentsHomogInter}
The mean and variance of the interference in the case of homogeneous interference are
\begin{align}
\bbE I(r) & \approx
\frac{\bbE Z} {C(R_c+R_g-r)^n}\left[\frac{2\pi \lambda (R_c+R_g-r)^2}{(n-2)}+1\right],
\label{eq:interferenceMeanSimplified}\\
\mathrm{var} ~I(r) & \approx   \frac{1} {C^2(R_c+R_g-r)^{2n}}\left[\frac{\pi \lambda\bbE Z (R_c+R_g-r)^2}{(n-1)}+\mathrm{var}Z\right]. \label{eq:interferenceVarianceSimplified}
\end{align}
\end{proposition}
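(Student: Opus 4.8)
The plan is to compute the exact moments of the small-ball surrogate $I(r) = I_0(r) + I_1(r)$ introduced in \eqref{eq:rxInterference} and \eqref{eq:rxInterference1appro}; the ``$\approx$'' in the statement is inherited entirely from that surrogate, not from the moment computation itself. Throughout set $a \bydef R_c + R_g - r$. By the standing assumption $a > d_0$, the path loss \eqref{eq:pathlossFixed} is in its power-law regime, $\ell(v) = C v^n$, for every $v \ge a$, so no clipping of $\ell$ enters the integrals below. Since $I_0(r)$ depends only on the dominant interferer's mark while $I_1(r)$ depends only on the field $\Phi \setminus B(a)$ and its marks, the two are independent, and it suffices to add the means and add the variances.

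First I would dispatch $I_0$. From \eqref{eq:rxInterference1appro}, $I_0(r) = Z_0/\ell(a) = Z_0/(C a^n)$ where $Z_0$ has the mark law $Z = G_k L_k P_g$, so $\bbE I_0(r) = \bbE Z/(C a^n)$ and $\mathrm{var}\, I_0(r) = \mathrm{var}\, Z/(C^2 a^{2n})$. These are exactly the additive ``$1$'' inside the bracket of \eqref{eq:interferenceMeanSimplified} and the ``$\mathrm{var}\, Z$'' term of \eqref{eq:interferenceVarianceSimplified}.

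Next I would handle $I_1(r)$ with the first- and second-moment identities for a marked Poisson shot-noise functional \cite[Eqs.~2.19, 2.21]{BacBla:Stochastic-Geometry-and-Wireless:09}, using that the marks $G_k L_k$ are i.i.d.\ and independent of $\Phi$. Passing to polar coordinates on $\mathbb{R}^2 \setminus B(a)$ and substituting $\ell(v) = C v^n$,
\begin{align}
\bbE I_1(r) &= \lambda \, \bbE Z \int_{a}^{\infty} \frac{2\pi v}{C v^{n}}\, dv = \frac{2\pi \lambda \, \bbE Z}{C(n-2)\, a^{n-2}}, \nonumber\\
\mathrm{var}\, I_1(r) &= \lambda \, \bbE Z^{2} \int_{a}^{\infty} \frac{2\pi v}{C^{2} v^{2n}}\, dv = \frac{\pi \lambda \, \bbE Z^{2}}{C^{2}(n-1)\, a^{2n-2}}. \nonumber
\end{align}
Convergence of the first integral uses $n > 2$ and of the second $n > 1$; finiteness of $\bbE Z^{2}$, hence of $\mathrm{var}\, I_1$, is precisely what the non-singular path loss buys.

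Finally I would combine. Adding the means, pulling out $1/(C a^n)$ and writing $a^{n-2} = a^{n}/a^{2}$ gives \eqref{eq:interferenceMeanSimplified}; adding the variances (valid by independence of $I_0$ and $I_1$) and pulling out $1/(C^{2} a^{2n})$ gives the bracketed expression of \eqref{eq:interferenceVarianceSimplified}. I do not expect a genuine obstacle -- the argument is bookkeeping layered on the shot-noise moment formulas. The two points needing care are (i) keeping $\ell$ in its $C v^n$ regime over the entire integration domain, which $a > d_0$ secures, and (ii) that the $I_1$ variance enters through the mark's \emph{second} moment: I would double-check \eqref{eq:interferenceVarianceSimplified} on this, since the $I_1$ term should read $\pi\lambda\, \bbE Z^{2} (R_c+R_g-r)^{2}/(n-1)$ rather than with $\bbE Z$.
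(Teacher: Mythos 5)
Your proposal is correct and follows exactly the route the paper intends: it gives no explicit proof, only the instruction to apply the shot-noise moment formulas of \cite[Eqs.~2.19, 2.21]{BacBla:Stochastic-Geometry-and-Wireless:09} to the small-ball surrogate $(\ref{eq:rxInterference})$--$(\ref{eq:rxInterference1appro})$, splitting $I$ into the independent pieces $I_0$ and $I_1$ and adding means and variances, which is precisely your bookkeeping. Your closing caveat is also well founded: the Campbell variance of $I_1$ involves the mark's second moment, so the printed $\bbE Z$ in \eqref{eq:interferenceVarianceSimplified} should indeed be $\bbE Z^2$ --- a typo corroborated by the heterogeneous counterpart \eqref{eq:interferenceVarianceSimplifiedHetEx}, which correctly carries $\bbE \left(Z^{(m)}\right)^2$.
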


In the case of heterogeneous interference, we propose to use the moments of the sum to approximate using the equivalent mark distribution. We summarize the key results in the following proposition.
\begin{proposition}[Moments of Heterogeneous Interference]\label{prop:memHetInt}
The mean and variance of the interference in the case of heterogeneous interference are
\begin{align}
\bbE J(r)& =  \sum_{m=1}^M
\frac{2\pi\lambda_m\bbE Z^{(m)}} {C(R_c+R_g^{(m)}-r)^{n-2}}+\bbE I_0(r),
\label{eq:interferenceMeanSimplifiedHetEx}\\
\mathrm{var} J(r)& = \sum_{m=1}^M \frac{\pi\lambda_m\bbE \left(Z^{(m)}\right)^2} {(n-1)C^2(R_c+R_g^{(m)}-r)^{2n-2}}+\mathrm{var}I_0(r),\label{eq:interferenceVarianceSimplifiedHetEx}\\
\bbE I_0(r)& = \sum_{m=1}^M
\frac{a_m^2\lambda_m\bbE Z^{(m)} }{\lambda C(R_c+R_g^{(m)}-r)^{n}},\\
\mathrm{var} I_0(r)& = \sum_{m=1}^M
\frac{a_m^2\lambda_m\bbE\left( Z^{(m)}\right)^2 }{\lambda C^2(R_c+R_g^{(m)}-r)^{2n}}-\left(\bbE I_{0}(r)\right)^2
\end{align}
where $Z^{(m)} = G^{(m)}_k L^{(m)}_k P_m$.
\end{proposition}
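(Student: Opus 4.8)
The plan is to build the result on the additive split $J(r)=\sum_{m=1}^{M} I_m(r)+I_0(r)$ recorded in \eqref{eq:sumInfHetTotal}, together with the mutual independence of the $M+1$ summands. Independence of the shot-noise terms $I_1(r),\dots,I_M(r)$ comes from the independence of the PPPs $\Phi_1,\dots,\Phi_M$ and of their marks across tiers; independence of the dominant-interferer term $I_0(r)$ from all of them comes from the modeling assumption that, once the guard radii are fixed, the dominant interferer's location is independent of all other base stations and its mark is drawn independently. Hence $\bbE J(r)=\sum_{m=1}^M \bbE I_m(r)+\bbE I_0(r)$ and, by independence, $\mathrm{var}\,J(r)=\sum_{m=1}^M \mathrm{var}\,I_m(r)+\mathrm{var}\,I_0(r)$, so it suffices to compute the per-tier shot-noise moments and the dominant-interferer moments separately.

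For a fixed tier $m$, under the small-ball approximation used in \eqref{eq:sumInfHetTotal} the variable $I_m(r)$ is a Poisson shot noise with i.i.d.\ marks $Z^{(m)}=G^{(m)}_kL^{(m)}_kP_m$ and response function $x\mapsto \ell(|x|)^{-1}$ supported on $\mathbb{R}^2\setminus B(d_m)$, where $d_m:=R_c+R_g^{(m)}-r$. Applying Campbell's formula for the mean and the Poisson second-cumulant formula for the variance, in the form of \cite[Eqs.\ 2.19, 2.21]{BacBla:Stochastic-Geometry-and-Wireless:09}, gives $\bbE I_m(r)=\lambda_m\,\bbE Z^{(m)}\int_{\mathbb{R}^2\setminus B(d_m)}\ell(|x|)^{-1}\,dx$ and $\mathrm{var}\,I_m(r)=\lambda_m\,\bbE (Z^{(m)})^2\int_{\mathbb{R}^2\setminus B(d_m)}\ell(|x|)^{-2}\,dx$. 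Since $d_m>d_0$ the path loss equals $Cv^n$ throughout the integration domain, so in polar coordinates these reduce to the elementary radial integrals $\int_{d_m}^{\infty}v^{1-n}\,dv$ and $\int_{d_m}^{\infty}v^{1-2n}\,dv$, which converge precisely because $n>2$; substituting their values yields \eqref{eq:interferenceMeanSimplifiedHetEx} and \eqref{eq:interferenceVarianceSimplifiedHetEx}.

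For the dominant interferer I would condition on its tier. By construction it sits, under the small-ball approximation (cf.\ \eqref{eq:rxInterference1appro}), at distance $d_m$ from the receiver and belongs to tier $m$ with probability $a_m^2\lambda_m/\lambda$, in which case $I_0(r)=Z^{(m)}/(C d_m^{\,n})$. The tower rule then gives $\bbE I_0(r)=\sum_m (a_m^2\lambda_m/\lambda)\,\bbE Z^{(m)}/(C d_m^{\,n})$ and $\bbE I_0(r)^2=\sum_m (a_m^2\lambda_m/\lambda)\,\bbE (Z^{(m)})^2/(C^2 d_m^{\,2n})$; combining these through $\mathrm{var}\,I_0(r)=\bbE I_0(r)^2-(\bbE I_0(r))^2$ produces the two remaining displayed identities. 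Substituting the dominant-interferer mean and variance into the sums of the first paragraph finishes the proof.

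I do not expect a genuine analytic obstacle: the statement is bookkeeping layered on top of the standard Poisson shot-noise moment formulas. The steps that need care are (i) making the independence of the $M+1$ components explicit so that means and variances add, (ii) treating the dominant interferer's tier membership as a bona fide random variable and using the law of total variance correctly — the cross term disappears only when one writes $\mathrm{var}\,I_0=\bbE I_0^2-(\bbE I_0)^2$ directly rather than summing conditional variances — and (iii) checking that every integral converges, which hinges on $n>2$ together with $d_m>d_0>0$ keeping the nonsingular path loss bounded and integrable on the exterior region. These, rather than any hard estimate, are where the argument must be pinned down.
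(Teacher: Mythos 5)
Your proof is correct and follows exactly the route the paper intends (it gives no explicit derivation beyond pointing to Campbell's formulas, \cite[Eqs.~2.19, 2.21]{BacBla:Stochastic-Geometry-and-Wireless:09}): additive decomposition of $J(r)$ into independent per-tier shot-noise terms plus the dominant interferer, radial integration of the nonsingular path loss outside $B(R_c+R_g^{(m)}-r)$, and conditioning $I_0$ on its tier with $\mathrm{var}\,I_0=\bbE I_0^2-(\bbE I_0)^2$. One point worth flagging: carried out carefully, your Campbell computation gives $\bbE I_m(r)=\frac{2\pi\lambda_m \bbE Z^{(m)}}{(n-2)\,C\,(R_c+R_g^{(m)}-r)^{n-2}}$, i.e.\ with a factor $1/(n-2)$ that is present in the homogeneous Proposition~\ref{prop:momentsHomogInter} but missing from \eqref{eq:interferenceMeanSimplifiedHetEx}; your version is the consistent one and the stated formula appears to contain a typo.
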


When $n=4$, we can calculate the moments of the interference in closed form without the small ball approximation.
\begin{proposition}[Closed Form of Moments No Small Ball]
When $n=4$, the exact moments of heterogeneous interference can be computed through {\small
\begin{align}
\bbE I_0(\beta R_c) & =
\frac{1}{\lambda C R_c^4}\sum_{i=1}^{M}\frac{\lambda_i\bbE Z^{(i)}a_i^2(a_i^2+\beta^2)}{(a_i^2-\beta^2)^3},\\
\bbE I_0^2(\beta R_c) & =  \frac{1}{\lambda C^2 R_c^8}\sum_{i=1}^{M}\lambda_i\bbE \left(Z^{(i)}\right)^2\nonumber\\
&\times\frac{a_i^2(a_i^2+\beta^2)(a_i^4+8a_i^2 \beta^2+\beta^4)}{(a_i^2-\beta^2)^7},\\
\bbE I_i(\beta R_c)&=\frac{2\pi\lambda_i\bbE Z^{(i)}a_i^2}{2C\:R_c(\beta^2-a_i^2)^2},\\
\mathrm{var} I_i(\beta R_c) &=
\frac{2\pi\lambda_i\bbE \left(Z^{(i)}\right)^2 a_i^2\:(a_i^4+6a_i^2\beta^2+3\beta^4)}{6C^2\:R_c^6\:(a_i^2-\beta^2)^6}.
\end{align}}
\end{proposition}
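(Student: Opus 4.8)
\emph{Proof proposal.} The plan is to compute each contribution separately, using the fact that for $n=4$ every spatial integral that appears is an elementary rational integral, so the small-ball approximation employed in Proposition~\ref{prop:memHetInt} can be dropped. Place the receiver at the origin and the serving base station at distance $\beta R_c$ along the positive $x$-axis. Since $R_g^{(m)}=(a_m-1)R_c$, the exclusion region for tier $m$ is the disk $D_m$ of radius $a_m R_c$ centred at the serving base station, i.e.\ a disk that is \emph{offset} from the receiver by $\beta R_c$. Because $a_m=1+(P_m/P_1)^{1/n}>1\ge\beta$, we have $a_m R_c>\beta R_c$, a strict inequality that will be used to resolve a square root. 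Since there is a guard region, we may work with the singular path loss $\ell(r)=Cr^{4}$ without loss.

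For the shot-noise term $I_i$, I would invoke the first- and second-moment formulas for a PPP functional with i.i.d.\ marks (the identities from \cite{BacBla:Stochastic-Geometry-and-Wireless:09} cited in \secref{sec:interference}), namely $\bbE I_i=\lambda_i\,\bbE Z^{(i)}\int_{\bbR^2\setminus D_i}\frac{dx}{C\|x\|^{4}}$ and $\mathrm{var}\,I_i=\lambda_i\,\bbE (Z^{(i)})^2\int_{\bbR^2\setminus D_i}\frac{dx}{C^{2}\|x\|^{8}}$. Parametrising $x$ in polar coordinates $(v,\phi)$ about the serving base station, the law of cosines gives $\|x\|^2=v^2+\beta^2R_c^2-2v\beta R_c\cos\phi$ with $v\ge a_iR_c$. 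The angular integrals $\int_0^{2\pi}(A-B\cos\phi)^{-k}\,d\phi$ for $k=2$ (the mean) and $k=4$ (the variance) follow by differentiating the identity $\int_0^{2\pi}(A-B\cos\phi)^{-1}d\phi=2\pi/\sqrt{A^2-B^2}$ repeatedly in $A$. The crucial simplification is that with $A=v^2+\beta^2R_c^2$ and $B=2v\beta R_c$ one gets $A^2-B^2=(v^2-\beta^2R_c^2)^2$, a perfect square (positive since $v\ge a_iR_c>\beta R_c$), so the radicals cancel and, after the substitution $u=v^2$, the remaining radial integral over $[a_i^2R_c^2,\infty)$ is a rational function handled by partial fractions. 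Collecting the pieces and factoring the resulting polynomials in $a_i,\beta$ gives the stated $\bbE I_i$ and $\mathrm{var}\,I_i$.

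For $I_0$, I would condition on the tier of the dominant interferer: by the model it belongs to tier $i$ with probability $a_i^2\lambda_i/\lambda$, in which case it sits at distance $a_iR_c$ from the serving base station at a uniformly random angle $\eta$, carrying an independent mark $Z^{(i)}$. Thus $\bbE I_0=\sum_i\frac{a_i^2\lambda_i}{\lambda}\,\bbE Z^{(i)}\,\bbE_\eta[\ell(R)^{-1}]$ and $\bbE I_0^2=\sum_i\frac{a_i^2\lambda_i}{\lambda}\,\bbE (Z^{(i)})^2\,\bbE_\eta[\ell(R)^{-2}]$, where $R^2=R_c^2\,(a_i^2+\beta^2-2a_i\beta\cos\eta)$. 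These expectations are exactly the same angular integrals as before, now with the \emph{fixed} values $A=a_i^2+\beta^2$, $B=2a_i\beta$ (so again $A^2-B^2=(a_i^2-\beta^2)^2$ is a perfect square), divided by $2\pi$; no radial integration is needed. Substituting and simplifying yields the displayed sums for $\bbE I_0$ and $\bbE I_0^2$.

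The genuinely conceptual point is the observation that $n=4$ turns $A^2-B^2$ into a perfect square, which is why the geometry-dependent, off-centre exclusion disk can be integrated exactly rather than through the small-ball bound. The main labour is the routine bookkeeping of step two: deriving the $k=4$ angular integral, carrying out the partial-fraction integration in $u$, and compressing the output into the compact factored forms $a_i^2(a_i^2+\beta^2)$, $a_i^2(a_i^2+\beta^2)(a_i^4+8a_i^2\beta^2+\beta^4)$, and $a_i^2(a_i^4+6a_i^2\beta^2+3\beta^4)$. One should also re-check the normalisation of each displayed formula (a couple appear to carry a discrepant power of $R_c$ relative to the direct computation), but this is cosmetic and does not affect the derivation.
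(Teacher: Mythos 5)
The paper states this proposition without proof, and your derivation — Campbell's formulas for the mean and variance of the marked shot noise over the complement of the tier-$i$ exclusion disk of radius $a_iR_c$ centred at the serving base station, the law-of-cosines parametrization, the observation that $A^2-B^2=(v^2-\beta^2R_c^2)^2$ is a perfect square when $n=4$, and the mixture over tiers with weights $a_i^2\lambda_i/\lambda$ for the dominant interferer — is exactly the computation the stated formulas come from; I checked the algebra and your route reproduces $\bbE I_0$, $\bbE I_0^2$ and $\mathrm{var}\,I_i$ verbatim. Your closing remark about a discrepant power of $R_c$ is also right, but it affects only the third display: the direct computation gives $\bbE I_i(\beta R_c)=\pi\lambda_i\bbE Z^{(i)}a_i^2/\bigl(C\,R_c^2(\beta^2-a_i^2)^2\bigr)$, i.e.\ the paper's $R_c$ in that denominator should be $R_c^2$ (as dimensional analysis and the $\beta\to 0$ comparison with Proposition~\ref{prop:memHetInt} confirm).
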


Lastly, an additional term could be included accounting for the cross-tier interference $I_{cross}$ in a straightforward way. Further modifications are possible. For example, having different numbers of active users in an interfering cell with multiuser MIMO could be included by having a sum of marked point processes each with a different fading distribution corresponding to different choices of active users. Note that we can incorporate cross-tier interference $I_{cross}$ into Proposition \ref{prop:memHetInt} by replacing $R_c + R_g -r$ with $R_{cross}$ and selecting the other parameters as appropriate.

\subsection{Gamma Approximation of the Interference}

While the Laplace transform of the interference power completely characterizes the distribution, it is often challenging to provide simple or closed-form solutions for a variety of scenarios of interest. Consequently, we pursue an approximation of the interference term that will yield simpler expressions targeted at the proposed hybrid setting using the Gamma distribution.
\begin{proposition}[Gamma Approximation of Interference]\label{prop:GammaApproxInf}
The $\Gamma[k_i ,\theta_i]$ random variable with the same mean and variance as $I(r)$ has
\begin{align}
k_i & =  \frac{(\bbE I)^2}{\mathrm{var}I} \label{eq:kGammaApprox} ~~~
\theta_i ~ =   \frac{\mathrm{var}I}{\bbE I}.
\end{align}
\end{proposition}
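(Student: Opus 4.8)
The plan is to invoke Lemma~\ref{lem:GammaApproximation} directly. That lemma states that for any distribution with mean $\mu = \bbE X$ and variance $\sigma^2$, the Gamma distribution $\Gamma[k,\theta]$ with matching first and second order moments has $k = \mu^2/\sigma^2$ and $\theta = \sigma^2/\mu$. Here the ``given distribution'' is the law of the interference random variable $I(r)$ (or $J(r)$ in the heterogeneous case), so the result follows by simply setting $\mu = \bbE I(r)$ and $\sigma^2 = \mathrm{var}\,I(r)$ and reading off the two parameters. This yields exactly \eqref{eq:kGammaApprox}.

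The only content beyond this substitution is that the two moments $\bbE I(r)$ and $\mathrm{var}\,I(r)$ are finite, so that the ratios in \eqref{eq:kGammaApprox} are well defined and positive. First I would note that $\bbE Z = \bbE[G_k L_k P_g] < \infty$ and $\mathrm{var}\,Z < \infty$ because $G_k$ is $\Gamma[k_g,\theta_g]$ (which has all moments finite) and $L_k$ is log-normal (likewise), and these are independent; the same holds for the composite-fading marks $Z^{(m)}$ in the heterogeneous case. Then, since the non-singular path-loss model \eqref{eq:pathlossFixed} keeps $\ell(\cdot)$ bounded below by $C d_0^n > 0$ and the guard region ensures the integration in \eqref{eq:laplaceInterferenceHet} starts at $R_c+R_g-r > d_0 > 0$, the integral $\int_{R_c+R_g-r}^\infty v^{1-n}\,dv$ converges for $n>2$ (and $\int v^{1-2n}\,dv$ converges for the variance), so the expressions in Propositions~\ref{prop:momentsHomogInter} and~\ref{prop:memHetInt} are finite and strictly positive. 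Positivity of $\bbE I(r)$ is immediate since $I(r)\ge 0$ and is not identically zero; positivity of $\mathrm{var}\,I(r)$ holds because $I(r)$ is non-degenerate (the fading marks and, in the Poisson part, the number and locations of interferers are random).

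I do not anticipate a genuine obstacle here: the statement is essentially a corollary of Lemma~\ref{lem:GammaApproximation} applied to the specific random variable $I(r)$, with the finiteness of the relevant moments already guaranteed by the modeling assumptions (Gamma small-scale fading, log-normal large-scale fading, non-singular path loss, positive guard region). If anything warrants a sentence of care, it is spelling out that ``second order moment matching'' in the sense of Lemma~\ref{lem:GammaApproximation} is exactly the operation being performed, and pointing the reader to Propositions~\ref{prop:momentsHomogInter} and~\ref{prop:memHetInt} for the explicit values of $\bbE I$ and $\mathrm{var}\,I$ to be substituted into \eqref{eq:kGammaApprox}.
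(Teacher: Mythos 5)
Your proposal is correct and matches the paper's (implicit) argument: the proposition is stated without proof precisely because it is an immediate application of Lemma~\ref{lem:GammaApproximation} to $I(r)$, with the moments supplied by Propositions~\ref{prop:momentsHomogInter} and~\ref{prop:memHetInt}. Your added remark on finiteness of the moments is consistent with the paper's own justification (the nonsingular path-loss model and guard region guarantee finite moments) and is a reasonable bit of extra care, not a departure.
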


%
%
%
%
%
%

\section{Characterization of the Random $\SIR$} \label{sec:Calculations}

In this section we characterize the system performance as a function of the random quantity the signal to interference ratio ($\SIR$) given by
\begin{align}
\SIR(r) & =  \frac{S(r)}{J(r)} \label{eq:SIR}.
\end{align}
The $\SIR$ is a useful quantification for performance analysis in cellular systems because performance at the cell edge is usually interference limited; we consider additive noise and signal to interference plus noise ratio ($\SINR$) in the simulations. Because simplified expressions including the exact distribution calculated in \secref{sec:extensionHomInf}-\secref{sec:extensionHetInf} are only available in special cases, we use the Gamma approximation described in \secref{sec:interference}.
To make the results concrete, we assume that $S(r) = S_r $ where $S_r \sim \Gamma[k_s, \theta_s ]$. Note that in the distant-dependent path-loss and transmit power are absorbed into the $\theta_s$ term for ease of presentation.

We use the $\SIR(r)$ to evaluate two metrics of performance: the success probability defined as $\bbP( \SIR(r) > T)$ where $T$ is some threshold and the ergodic achievable rate given by
\begin{align}
\tau(r) & \bydef  \bbE \ln(1 + \SIR(r)).  \label{eq:averageRate}
\end{align}
The success probability is one minus the outage probability and gives a measure of diversity performance and severity of the fading. The achievable rate gives the average rate that can be achieved at $r$ assuming that the interference is Gaussian.

\subsection{Success Probability}
We first provide the success probability without proceeding to a Gamma approximation for the out-of-cell interference and signal distribution given by $S(r) = S_r$ where $S_r \sim \Gamma[k_s, \theta_s ]$.
Using the results from \cite{HunAndWeb08}, the success probability at location $r$ is given by
\begin{align}\label{eq:succprobMIMO}
\bbP(\SIR(r) > T)  = \sum_{j=0}^{k_s - 1}\frac{1}{j!}\left(-\frac{\ell(r)T}{P_s\theta_s}\right)^j \frac{d^j\mathcal{L}_I(s)}{ds^j}\Big\rvert_{s=\frac{\ell(r)T}{P_s\theta_s}}
\end{align}
where in the case of heterogeneous interference, $\mathcal{L}_I(s)$ is given by (\ref{eq:laplaceInterferenceHet}).
In the special case of $k_s = 1$ the success probability can be simplified as
\begin{align}
\bbP(\SIR(r) > T) & = \mathcal{L}_I(\ell(r)T/(P_s\theta_s)).
\end{align}
Evidently, the closed-form expression (\ref{eq:succprobMIMO}) is involved and no useful insight can be obtained. This motivates the use of Gamma approximation for the aggregate interference as a means to provide simpler and useful expressions.

We evaluate below the success probability for the case of Gamma distributed interference, effectively $J(r) \sim \Gamma[k_i, \theta_i]$.
In the case of homogeneous interference, the expression for $k_i$ and $\theta_i$ is found \eqref{eq:kGammaApprox}. In the case of heterogeneous interference, the expressions for $k_i$ and $\theta_i$ are given by \eqref{eq:kGammaApprox}, replacing $I(r)$ by $J(r)$. The result is summarized in the following proposition.

\begin{proposition}[Success Prob. for Gamma Interference]\label{prop:General_Psucc}
The success probability when the signal distribution is $S(r) = S_r$ where $S_r \sim \Gamma[k_s, \theta_s ]$ and the interference distribution is $J(r) = I_r$ where $I_r \sim \Gamma[k_i, \theta_i]$ is given by
{\small
\begin{align}
\bbP( \SIR(r) > T) & =  \frac{\Gamma(k_s+k_i)}{\Gamma(k_s)} \left( \frac{\theta_s}{T  \theta_i} \right)^{k_i} \nonumber \\
& \times {}_2\tilde{F}_1\left(k_i, k_s+k_i, 1+k_i, - \frac{\theta_s}{T  \theta_i} \right)
\end{align}}
where ${}_2\tilde{F}_1$ is a regularized hypergeometric function.
\end{proposition}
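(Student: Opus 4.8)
The plan is to compute $\bbP(\SIR(r) > T) = \bbP(S_r > T\, I_r)$ directly by conditioning on the interference $I_r$ and integrating against the Gamma density. Writing $\bbP(S_r > t) = 1 - F_{S_r}(t)$, and using that for a $\Gamma[k_s,\theta_s]$ variable with \emph{integer} $k_s$ the complementary CDF is $e^{-t/\theta_s}\sum_{j=0}^{k_s-1}\frac{1}{j!}(t/\theta_s)^j$, one could get a finite sum; but to keep $k_s$ general I would instead keep the CDF in closed form via the regularized lower incomplete gamma function and write
\begin{align}
\bbP(\SIR(r) > T) & = \int_0^\infty \bbP\!\left(S_r > T y\right) f_{I_r}(y)\, dy \nonumber \\
& = \int_0^\infty \left(1 - \frac{\gamma(k_s, Ty/\theta_s)}{\Gamma(k_s)}\right) \frac{y^{k_i-1} e^{-y/\theta_i}}{\theta_i^{k_i}\Gamma(k_i)}\, dy.
\end{align}

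The first step is to reduce this to a single standard integral. Splitting off the ``$1$'' term gives $\int_0^\infty f_{I_r}(y)\,dy = 1$, which cancels, so the quantity of interest becomes $\bbP(S_r \le T I_r)$ subtracted from $1$ — equivalently one works with $\bbP(S_r > T I_r) = \bbE_{I_r}\big[\bar F_{S_r}(T I_r)\big]$. The cleaner route is to swap to the ratio variable: since $S_r$ and $I_r$ are independent Gamma variables, $W := S_r/I_r$ has a (scaled) \emph{beta-prime} (inverted-beta / Pearson type VI) distribution, and $\bbP(W > T)$ is exactly the upper tail of a beta-prime law. So the key step is: (i) recall/derive that if $S_r \sim \Gamma[k_s,\theta_s]$ and $I_r \sim \Gamma[k_i,\theta_i]$ independent, then $\frac{\theta_i}{\theta_s}\cdot\frac{S_r}{I_r}$ has density proportional to $\frac{w^{k_s-1}}{(1+w)^{k_s+k_i}}$ on $(0,\infty)$ with normalizing constant $1/\mathcal{B}(k_s,k_i)$; (ii) integrate this density from $T\theta_i/\theta_s$ to $\infty$.

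The second step is evaluating that tail integral. Substituting $u = w/(1+w)$ turns $\int_{c}^\infty \frac{w^{k_s-1}}{(1+w)^{k_s+k_i}}\,dw$ into an incomplete beta integral $\int_{c/(1+c)}^1 u^{k_s-1}(1-u)^{k_i-1}\,du$; alternatively, expanding $(1+w)^{-(k_s+k_i)}$ and integrating termwise produces the Gauss hypergeometric series. Either way, using the standard identity
\[
\int_c^\infty \frac{w^{k_s-1}}{(1+w)^{k_s+k_i}}\,dw = \frac{c^{-k_i}}{k_i}\, {}_2F_1\!\left(k_i, k_s+k_i; 1+k_i; -\tfrac{1}{c}\right)
\]
with $c = T\theta_i/\theta_s$, multiplying by $1/\mathcal{B}(k_s,k_i) = \Gamma(k_s+k_i)/(\Gamma(k_s)\Gamma(k_i))$, and absorbing the $1/(k_i\Gamma(k_i)) = 1/\Gamma(1+k_i)$ into the regularized hypergeometric function ${}_2\tilde F_1(a,b;c;z) = {}_2F_1(a,b;c;z)/\Gamma(c)$, one arrives precisely at the claimed formula with $(\theta_s/(T\theta_i))^{k_i}$ out front. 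I would verify the $k_s=1$ special case against the earlier identity $\bbP(\SIR(r)>T)=\mathcal{L}_I(\ell(r)T/(P_s\theta_s))$ as a sanity check, noting that for $k_s=1$ the interference Gamma approximation's Laplace transform evaluated at the right point should reproduce ${}_2\tilde F_1$ collapsing appropriately.

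The main obstacle is purely bookkeeping: getting the hypergeometric argument, the power of $\theta_s/(T\theta_i)$, and the Gamma prefactors to line up exactly in the regularized-${}_2\tilde F_1$ normalization, since there are several equivalent forms of the beta-prime tail (via Euler transformations of ${}_2F_1$) and one must pick the one matching the statement. There is no analytic difficulty — both random variables have finite moments (nonsingular path loss), the integral converges absolutely since $n>2$ guarantees the Gamma approximation parameters are finite and positive, and the ratio distribution is a classical object. The whole argument is a few lines once the beta-prime representation is invoked.
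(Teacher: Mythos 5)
Your proposal is correct and follows essentially the same route as the paper: both reduce the success probability to $\bbP(S_r > T I_r)$, condition on the interference, and evaluate the expectation of the (complementary) signal CDF, your beta-prime/ratio-distribution representation being simply an explicit way of carrying out the integral that the paper's one-line proof leaves implicit. You are also right to work with the complementary CDF $\bar{F}_S$ --- the paper writes $\bbE\left(F_S(TI_r)\right)$, which is evidently shorthand (or a typo) for $\bbE\left(1-F_S(TI_r)\right)$, and your computation confirms the stated formula.
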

\begin{proof}
First we rewrite the probability to separate the signal and interference terms as $\bbP( \SIR(r) > T) =  \bbP \left( S_r > T I_r  \right)$.
Now conditioning on the interference and evaluating the expectation $\bbP( \SIR(r) > T) =  \bbE \left(F_S(TI_r ) \right) = $
\begin{align}
\frac{\Gamma(k_s+k_i)}{\Gamma(k_s)} \left( \frac{\theta_s}{T  \theta_i} \right)^{k_i}
{}_2\tilde{F}_1\left(k_i, k_s+k_i, 1+k_i, - \frac{\theta_s}{T  \theta_i} \right),
\end{align}
\noindent
where $F_S(\cdot)$ is the  cumulative distribution function of $S_r$ and the second equality follows from evaluating the expectation.
\end{proof}

We can obtain a more exact expression in the case where each interferer is separately assumed to have a Gamma distribution.
\begin{proposition}[Heterogeneous Success Probability]\label{prop:General_Psucc_Het}
The success probability when the signal distribution is $S(r) = S_r$ where $S_r \sim \Gamma[k_s, \theta_s ]$ and the interference distribution is  $J(r) = J_r^{(1)}+J_r^{(2)}+\ldots+J_r^{(M)}$ where $J_r^{(m)} \sim \Gamma[k_m, \theta_m]$ for $m=1,2,\ldots,M$ is given by
{\small
\begin{align}
\bbP( \SIR(r) > T) & =
C \sum_{n=0}^\infty c_n \frac{\Gamma(k_s+\rho+n)}{\Gamma(k_s)} \left( \frac{\theta_s}{T  \theta_{\min}} \right)^{n+\rho} \nonumber \\
& \times {}_2\tilde{F}_1\left(n+\rho, k_s+n+\rho; 1+n+\rho; - \frac{\theta_s}{T  \theta_{\min}} \right)
\end{align}}
where the parameters $C$, $\rho$, $\theta_{min}$ are found via Theorem \ref{thm:mos}.
\end{proposition}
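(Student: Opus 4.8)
The plan is to combine the success-probability formula of Proposition~\ref{prop:General_Psucc} with the exact series representation of the sum of independent Gamma random variables from Theorem~\ref{thm:mos}. The starting point is identical to the proof of Proposition~\ref{prop:General_Psucc}: write $\bbP(\SIR(r) > T) = \bbP(S_r > T\,J_r) = \bbE\left[F_S(T\,J_r)\right]$, where $F_S$ is the CDF of $S_r \sim \Gamma[k_s,\theta_s]$ and $J_r = \sum_{m=1}^M J_r^{(m)}$. The only new ingredient is that instead of replacing $J_r$ by a single moment-matched Gamma variable, I keep it exact and substitute the density $f_{J}(y)$ given by \eqref{eq:fullSumGamma} in Theorem~\ref{thm:mos}, namely $f_J(y) = C\sum_{n=0}^\infty \frac{c_n}{\Gamma(\rho+n)\theta_{\min}^{\rho+n}} y^{\rho+n-1} e^{-y/\theta_{\min}}$, with $C$, $\rho$, $\theta_{\min}$, $c_n$ as defined there.

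Next I would interchange the infinite sum with the expectation integral $\bbE[F_S(T J_r)] = \int_0^\infty F_S(Ty) f_J(y)\,dy$, justified because every term is nonnegative (the $c_n$ are nonnegative, $F_S \ge 0$, and the Gamma densities are nonnegative) so Tonelli's theorem applies. This reduces the computation to evaluating, for each $n$, the elementary-looking integral
\begin{align}
\int_0^\infty F_S(Ty)\, y^{\rho+n-1} e^{-y/\theta_{\min}}\,dy .
\end{align}
But this is precisely the integral $\bbE[F_S(T I_r)]$ that appears in the proof of Proposition~\ref{prop:General_Psucc} with $I_r \sim \Gamma[\rho+n,\theta_{\min}]$ — i.e., shape $k_i \mapsto \rho+n$ and scale $\theta_i \mapsto \theta_{\min}$. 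Hence each term evaluates, up to the normalization $\frac{1}{\Gamma(\rho+n)\theta_{\min}^{\rho+n}}$ already sitting in $f_J$, to $\frac{\Gamma(k_s+\rho+n)}{\Gamma(k_s)}\left(\frac{\theta_s}{T\theta_{\min}}\right)^{\rho+n}\,{}_2\tilde F_1\!\left(\rho+n,\,k_s+\rho+n;\,1+\rho+n;\,-\frac{\theta_s}{T\theta_{\min}}\right)$. Assembling the series with the coefficients $C c_n$ gives exactly the claimed expression.

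The main obstacle is rigor in the termwise integration and in convergence of the resulting series. I would note that $F_S \le 1$, so each integral is bounded by $\int_0^\infty y^{\rho+n-1}e^{-y/\theta_{\min}}\,dy = \Gamma(\rho+n)\theta_{\min}^{\rho+n}$, whence the $n$-th term of the final series is bounded in absolute value by $C c_n$; since $f_J$ is a genuine probability density, $C\sum_n c_n = 1$ (equivalently, the bound follows from the convergence properties established in~\cite{Mos:The-distribution-of-the-sum-of-independent-gamma:85}), so the series converges absolutely and the interchange is legitimate. A secondary bookkeeping point is to confirm that the regularized hypergeometric identity used in Proposition~\ref{prop:General_Psucc} holds for the non-integer shape $\rho+n$ and not merely for integer $k_i$ — but that identity is an integral evaluation valid for all real shape parameters $>0$, so no extra work is needed. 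For practical use one truncates the series, inheriting the error bounds of Theorem~\ref{thm:mos}.
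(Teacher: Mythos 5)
Your proposal is correct and follows essentially the same route as the paper's own proof: expand $\bbE[F_S(T\,J_r)]$ using the Moschopoulos series of Theorem~\ref{thm:mos}, recognize it as a mixture of $\Gamma[\rho+n,\theta_{\min}]$ variables with weights $C c_n$, and apply Proposition~\ref{prop:General_Psucc} termwise. Your added justification of the sum--integral interchange via Tonelli and the bound $|{\rm term}_n| \le C c_n$ is a welcome rigor detail the paper omits, but it does not change the argument.
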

\begin{proof}
First we rewrite the probability to separate the signal and interference terms, and then conditioning on the interference, we evaluate the expectation, i.e.
{\small
\begin{align}
\bbP( \SIR(r) > T) & =  \bbP \left( S_r > T \sum_m I_r^{(m)}  \right) =  \bbE \left(F_S(T \sum_m I_r^{(m)} ) \right).
\end{align}}
From the results of Theorem \ref{thm:mos}, the distribution of  $\sum_m I_r^{(m)}$ can be expressed as
\begin{align}
f_Y(y) & =  C \sum_{n=0}^\infty \frac{c_n}{\Gamma(\rho+n)\theta_{min}^{\rho+n}} y^{\rho+n-1} e^{-y/\theta_{min}}
\end{align}
where $\theta_{min} \bydef \min_m \theta^{(m)}_i$, $\rho = \sum_{m=1}^M k_i^{(m)}$, $C = \prod_{m=1}^M (\theta_{min}/\theta^{(m)}_i)^{k^{(m)}_i}$, and $c_m$ is found by recursion from
\begin{align}
c_{n+1} & =  \frac{1}{n+1} \sum_{p=1}^{n+1} p \gamma_p c_{n+1-p}
\end{align}
with $\gamma_p ~ = ~ \sum_{m=1}^M k_i^{(m)} p^{-1} (1-\theta_{min}/\theta^{(m)}_{i})^n$. Now recognize that \eqref{eq:fullSumGamma} is essentially a mixture of $\Gamma[\rho+n,\theta_{\min}]$ random variables with weights $C c_n$. Now we can use the results of \eqref{prop:General_Psucc} to establish that
{\small
\begin{align}
\bbP( \SIR(r) > T) & =  C \sum_{n=0}^\infty c_n \frac{\Gamma(k_s+\rho+n)}{\Gamma(k_s)} \left( \frac{\theta_s}{T  \theta_{\min}} \right)^{n+\rho} \nonumber \\
& \times {}_2\tilde{F}_1\left(n+\rho, k_s+n+\rho; 1+n+\rho; - \frac{\theta_s}{T  \theta_{\min}} \right).
\end{align}}
\end{proof}
While the result in Proposition \ref{prop:General_Psucc_Het} is more exact, it requires truncating the infinite series. In related work
\cite{HeaWuKwo:Multiuser-MIMO-in-Distributed:11} we found that in most cases the approximation outperforms the truncated expression in (\ref{eq:fullSumGamma}) unless a large number of terms are kept, and therefore find Proposition \ref{prop:General_Psucc} is sufficient.

The regularized hypergeometric function ${}_2\tilde{F}_1\left(a,b,c,z \right)$ is available in many numerical packages or can be computed from unregularized hypergeometric function by recognizing that ${}_2\tilde{F}_1\left(a,b,c,z \right) ~ = ~ \frac{{}_2 F_1\left(a,b,c,z \right)}{ \Gamma(c)}$.  For small values of $|z|$ it is convenient to compute it using the truncated series definition \cite{Seaborn91, AAR01} while for larger values of $|z|$ the asymptotic expression can be employed \cite{Olver74}. Depending on which numerical package is employed, it may be faster to compute the $\SIR$ using Monte Carlo techniques based on $\Gamma[k_s, \theta_s]$ and $\Gamma[k_i, \theta_i]$.

\subsection{Average Rate}  \label{sec:AverageRate}

The average rate is useful for computing average rates at the cell edge and the area spectral efficiency, where the average rate at $r$ is integrated over the radius of the cell. Our calculations are be based on the observation that
$\bbE \ln(1 + \SIR(r))  = \bbE \ln(S_r + I_r) - \bbE \ln(I_r)$. From Lemma   \ref{lem:GammaCapacity}, $\bbE \ln(I_r)$ has a computable solution but $\bbE \ln(S_r + I_r)$ requires dealing with a sum of Gamma random variables.
To solve this problem, we can use the result in \eqref{eq:fullSumGamma} to find an expression for the expected log of the sum of Gamma random variables. The resulting expression is summarized in the following proposition.
\begin{proposition}\label{prop:exactSum}
Suppose that $\{ X_m \}_{m=1}^M$ are independent $\Gamma[k_m,\theta_m]$ distributed random variables. Then
\begin{align}
\bbE  \ln \left( \sum_{m=1}^M X_m \right) = \psi(\rho) + \ln(\theta_{min})  ~ + ~ C \sum_{n=0}^{\infty} c_n \sum_{m=0}^{n-1} \frac{1}{\rho+i}. \label{eq:fullSumGammaSeries}
\end{align}
\end{proposition}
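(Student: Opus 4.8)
The plan is to read the density \eqref{eq:fullSumGamma} of $Y \bydef \sum_{m=1}^M X_m$ furnished by Theorem \ref{thm:mos} as an infinite mixture of Gamma laws and to integrate it against $\ln y$ term by term, applying Lemma \ref{lem:GammaCapacity} to each component. Concretely, \eqref{eq:fullSumGamma} says $f_Y(y) = \sum_{n=0}^\infty (C c_n)\, h_n(y)$, where $h_n$ denotes the $\Gamma[\rho+n,\theta_{min}]$ density. Two elementary facts come first. Every $c_n$ is nonnegative: $c_0 = 1$ and $\gamma_m = \sum_i k_i m^{-1}(1-\theta_{min}/\theta_i)^m \ge 0$ since $\theta_{min}\le\theta_i$, so the recursion \eqref{eq:cRecursion} propagates nonnegativity by induction; hence all series below have nonnegative terms and Tonelli's theorem licenses each interchange of sum and integral. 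Also, integrating $f_Y$ termwise (each $h_n$ integrates to one) against the fact that $f_Y$ is a probability density yields the normalization $C\sum_{n=0}^\infty c_n = 1$, which is used twice.

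Then I would compute $\bbE \ln Y = \int_0^\infty (\ln y)\, f_Y(y)\,dy = C\sum_{n\ge 0} c_n\,\bbE \ln Z_n$ with $Z_n \sim \Gamma[\rho+n,\theta_{min}]$, and invoke Lemma \ref{lem:GammaCapacity} to get $\bbE \ln Z_n = \psi(\rho+n) + \ln\theta_{min}$. Using the normalization to collect the $\ln\theta_{min}$ contributions gives $\bbE \ln Y = \ln\theta_{min} + C\sum_{n\ge0} c_n\,\psi(\rho+n)$. To extract $\psi(\rho)$ I would apply the digamma recurrence $\psi(x+1)=\psi(x)+1/x$ iteratively, so that $\psi(\rho+n) = \psi(\rho) + \sum_{i=0}^{n-1}\frac{1}{\rho+i}$ (empty sum when $n=0$); substituting and using the normalization once more yields
\[
\bbE \ln Y \;=\; \psi(\rho) + \ln\theta_{min} + C\sum_{n=0}^\infty c_n \sum_{i=0}^{n-1}\frac{1}{\rho+i},
\]
which is the asserted identity (the inner summation index in the statement should read $i$, not $m$).

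The step needing genuine care is the absolute convergence underpinning these manipulations. Because all terms are nonnegative, it suffices to verify $\bbE|\ln Y| < \infty$: near $y=0$ the integrand $y^{\rho-1}|\ln y|$ is integrable since $\rho>0$, and for the right tail $(\ln y)^+ \le y$ gives $\bbE(\ln Y)^+ \le \bbE Y = \sum_m k_m\theta_m < \infty$. Given this, Tonelli identifies $C\sum_n c_n\,\bbE|\ln Z_n|$ with $\bbE|\ln Y|<\infty$, so the series $C\sum_n c_n\,\psi(\rho+n)$ and the double series in the statement converge absolutely and all rearrangements are legitimate; as a fallback one may instead appeal to the convergence and error bounds for the Moschopoulos expansion from \cite{Mos:The-distribution-of-the-sum-of-independent-gamma:85} cited after Theorem \ref{thm:mos}.
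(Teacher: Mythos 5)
Your derivation is correct. Note, however, that the paper does not actually prove this proposition in situ: its entire ``proof'' is a pointer to \cite[Proposition 5]{HeaWuKwo:Multiuser-MIMO-in-Distributed:11}. Your argument supplies the missing content, and it is the natural one: reading the Moschopoulos expansion \eqref{eq:fullSumGamma} as a nonnegative mixture of $\Gamma[\rho+n,\theta_{min}]$ densities, applying Lemma \ref{lem:GammaCapacity} componentwise, using the normalization $C\sum_{n}c_n=1$ to pull out $\ln\theta_{min}$, and then the digamma recurrence $\psi(\rho+n)=\psi(\rho)+\sum_{i=0}^{n-1}(\rho+i)^{-1}$ to isolate $\psi(\rho)$. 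Two things you add that neither the paper nor (implicitly) its citation makes explicit are welcome: the verification that $\bbE|\ln Y|<\infty$ (so the Tonelli interchanges are legitimate, using $c_n\ge 0$ from the recursion \eqref{eq:cRecursion} and $\gamma_m\ge 0$), and the observation that the inner summation index in \eqref{eq:fullSumGammaSeries} is a typo --- it should be $\sum_{i=0}^{n-1}\frac{1}{\rho+i}$, consistent with your computation. No gaps.
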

\begin{proof}
See \cite[Proposition 5]{HeaWuKwo:Multiuser-MIMO-in-Distributed:11}.
\end{proof}
Using this result, the average rate with homogeneous interference follows.
\begin{proposition}[Homogeneous Average Rate] \label{prop:avgRateHomo}
The average rate with homogeneous interference in the absence of noise is $\tau(r) =  \bbE \ln(1 + \SIR(r))=$
\begin{align}
C \sum_{n=0}^\infty c_n \ln(\theta_{min})  +  C \sum_{n=0}^\infty c_n \psi(\rho+n) - \psi(k_i) - \ln(\theta_i),
\end{align}
where $\theta_{min} \bydef \min \{\theta_s, \theta_i\}$, $\rho = k_s+k_i$, $C = (\theta_{min}/\theta_s)^{k_s}(\theta_{min}/\theta_i)^{k_i}$, and $c_n$ is found by recursion in \eqref{eq:cRecursion}.
\end{proposition}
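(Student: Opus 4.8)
The plan is to reduce everything to two results already in hand: Lemma~\ref{lem:GammaCapacity} for the expected logarithm of a single Gamma variable, and Proposition~\ref{prop:exactSum} for the expected logarithm of a sum of independent Gamma variables. First I would split, using only linearity of expectation,
\[
\tau(r) = \bbE\ln\!\left(1+\frac{S_r}{I_r}\right) = \bbE\ln(S_r+I_r) - \bbE\ln(I_r),
\]
which is legitimate because $S_r$ and $I_r$ have continuous Gamma laws on $(0,\infty)$, so the logarithms are a.s.\ defined, and both expectations are finite by Lemma~\ref{lem:GammaCapacity}. The term $\bbE\ln(I_r)$ is then read off immediately as $\psi(k_i)+\ln(\theta_i)$ since $I_r\sim\Gamma[k_i,\theta_i]$.

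For the remaining term I would invoke Proposition~\ref{prop:exactSum} with $M=2$, taking $X_1=S_r\sim\Gamma[k_s,\theta_s]$ and $X_2=I_r\sim\Gamma[k_i,\theta_i]$, which are independent by the modeling assumption that the desired-signal and interference fading are independent. This yields $\bbE\ln(S_r+I_r)=\psi(\rho)+\ln(\theta_{\min})+C\sum_{n=0}^{\infty}c_n\sum_{i=0}^{n-1}\tfrac{1}{\rho+i}$ with exactly the constants in the statement: $\rho=k_s+k_i$, $\theta_{\min}=\min\{\theta_s,\theta_i\}$, $C=(\theta_{\min}/\theta_s)^{k_s}(\theta_{\min}/\theta_i)^{k_i}$, and $c_n$ from \eqref{eq:cRecursion}. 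Subtracting $\psi(k_i)+\ln(\theta_i)$ already gives a valid closed form; the last step is purely cosmetic rearrangement into the displayed shape. I would use $C\sum_n c_n=1$ (the mixture weights $Cc_n$ in \eqref{eq:fullSumGamma} sum to one because $f_Y$ is a probability density) to write $\ln(\theta_{\min})=C\sum_n c_n\ln(\theta_{\min})$, and the digamma recursion $\psi(\rho+n)=\psi(\rho)+\sum_{i=0}^{n-1}\tfrac{1}{\rho+i}$, multiplied by $Cc_n$ and summed over $n$, to fold $\psi(\rho)+C\sum_n c_n\sum_{i=0}^{n-1}\tfrac{1}{\rho+i}$ into $C\sum_n c_n\,\psi(\rho+n)$. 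This produces precisely $C\sum_n c_n\ln(\theta_{\min})+C\sum_n c_n\psi(\rho+n)-\psi(k_i)-\ln(\theta_i)$.

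I do not expect a genuine obstacle: the analytic heavy lifting — interchanging expectation with the infinite Moschopoulos series and its convergence — lives inside Proposition~\ref{prop:exactSum} (and ultimately Theorem~\ref{thm:mos}), which is imported from prior work. The only points deserving a sentence of care are (i) verifying that the two-term sum $S_r+I_r$ genuinely satisfies the hypotheses of Theorem~\ref{thm:mos} under the parameter identification above, so that Proposition~\ref{prop:exactSum} applies, and (ii) checking that the final rearrangement is valid, i.e.\ that $\sum_n c_n\,\psi(\rho+n)$ converges absolutely; this follows from the decay of the $c_n$ established in \cite{Mos:The-distribution-of-the-sum-of-independent-gamma:85} together with $\psi(\rho+n)=O(\ln n)$.
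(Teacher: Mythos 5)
Your proposal is correct and takes essentially the same route as the paper: the paper's proof likewise writes $\tau(r)=\bbE\ln(S_r+I_r)-\bbE\ln(I_r)$, applies Proposition~\ref{prop:exactSum} to the sum of the two Gamma variables $\Gamma[k_s,\theta_s]$ and $\Gamma[k_i,\theta_i]$, and uses Lemma~\ref{lem:GammaCapacity} for the interference term. Your explicit justification of the final rearrangement (via $C\sum_n c_n=1$ and the digamma recursion $\psi(\rho+n)=\psi(\rho)+\sum_{i=0}^{n-1}\tfrac{1}{\rho+i}$) is correct and supplies detail the paper's one-line proof omits.
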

\begin{proof}
Follows from the application of Proposition \ref{prop:exactSum} to the case of the sum of two Gamma random variables $\Gamma[k_s,\theta_s]$ and $\Gamma[k_i, \theta_i]$, and using the result in Lemma \ref{lem:GammaCapacity}.
\end{proof}
The average rate with heterogeneous interference can be computed along the same lines as Proposition \ref{prop:avgRateHomo}.
\begin{proposition}[Heterogeneous Average Rate]
The average rate with homogeneous interference in the absence of noise is
\begin{align}
\tau(r) &= C \sum_{n=0}^\infty c_n \ln(\theta^{(si)}_{min}) ~ + ~ C \sum_{n=0}^\infty c_n \psi(\rho^{(si)}+n) \nonumber \\
&- D \sum_{n=0}^\infty d_n \ln(\theta^{(i)}_{min}) ~ - ~ D \sum_{n=0}^\infty d_n \psi(\rho^{(i)}+n)
\end{align}
where $\theta^{(si)}_{min} \bydef \min \{\theta_s, \theta_i^{(1)}, \theta_i^{(2)},\ldots,\theta_i^{(M)} \}$, $\rho^{(si)} = k_s+k_i^{(1)} + \cdots + k_i^{(M)}$, $\theta^{(i)}_{min} \bydef \min \{\theta_i^{(1)}, \theta_i^{(2)},\ldots,\theta_i^{(M)} \}$, $C = (\theta_{min}/\theta_s)^{k_s} \prod_{m=1}^M (\theta^{(si)}_{min}/\theta^{(m)}_i)^{k^{(m)}_i}$, $\rho^{(i)} = k_i^{(1)} + \cdots + k_i^{(M)}$, $D = \prod_{m=1}^M (\theta^{(i)}_{min}/\theta^{(m)}_i)^{k^{(m)}_i}$, and $c_m$ and $d_m$ are found by recursions like in \eqref{eq:cRecursion}.
\end{proposition}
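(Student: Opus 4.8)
The plan is to follow the proof of Proposition~\ref{prop:avgRateHomo} verbatim in spirit, the only change being that the interference is now itself a sum of $M$ independent Gamma variables instead of a single one. Start from
\begin{align}
\tau(r) = \bbE \ln\!\left(1 + \SIR(r)\right) = \bbE \ln\!\left(S_r + \textstyle\sum_{m=1}^M J_r^{(m)}\right) - \bbE \ln\!\left(\textstyle\sum_{m=1}^M J_r^{(m)}\right),
\end{align}
which is valid because $J(r)=\sum_m J_r^{(m)}$ and $1+\SIR(r) = (S_r + J(r))/J(r)$. Each of the two terms is now the expected logarithm of a sum of independent Gamma random variables, so Proposition~\ref{prop:exactSum} applies to each in turn; the signal and the $M$ tiers are mutually independent, so the required joint independence holds.

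For the first term, treat $S_r, J_r^{(1)},\dots,J_r^{(M)}$ as one family of $M+1$ independent Gamma variables with shapes $k_s, k_i^{(1)},\dots,k_i^{(M)}$ and scales $\theta_s, \theta_i^{(1)},\dots,\theta_i^{(M)}$, and feed this family into Proposition~\ref{prop:exactSum}. This yields $\psi(\rho^{(si)}) + \ln(\theta^{(si)}_{min})$ plus the series correction, with $\rho^{(si)} = k_s + \sum_m k_i^{(m)}$, $\theta^{(si)}_{min} = \min\{\theta_s,\theta_i^{(1)},\dots,\theta_i^{(M)}\}$, $C = (\theta^{(si)}_{min}/\theta_s)^{k_s}\prod_m (\theta^{(si)}_{min}/\theta_i^{(m)})^{k_i^{(m)}}$, and coefficients $c_n$ from the recursion \eqref{eq:cRecursion} built on these $M+1$ parameters. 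For the second term, run the identical calculation with only the $M$ interference variables, giving $\psi(\rho^{(i)}) + \ln(\theta^{(i)}_{min})$ plus its own series correction, with $\rho^{(i)} = \sum_m k_i^{(m)}$, $\theta^{(i)}_{min} = \min_m \theta_i^{(m)}$, $D = \prod_m (\theta^{(i)}_{min}/\theta_i^{(m)})^{k_i^{(m)}}$, and coefficients $d_n$.

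To reach the stated closed form, rewrite each correction term using two elementary facts: (i) the Moschopoulos mixture weights sum to one, i.e.\ $C\sum_n c_n = 1$ and $D\sum_n d_n = 1$, which is just the statement that \eqref{eq:fullSumGamma} integrates to one; and (ii) the digamma recurrence $\psi(\rho+n) = \psi(\rho) + \sum_{j=0}^{n-1} 1/(\rho+j)$. Combining these turns $\psi(\rho^{(si)}) + \ln(\theta^{(si)}_{min}) + C\sum_n c_n \sum_{j=0}^{n-1}1/(\rho^{(si)}+j)$ into $C\sum_n c_n \ln(\theta^{(si)}_{min}) + C\sum_n c_n \psi(\rho^{(si)}+n)$, and likewise for the interference-only term; subtracting gives exactly the claimed expression. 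The only structural difference from Proposition~\ref{prop:avgRateHomo} is that the $-\psi(k_i) - \ln(\theta_i)$ there is replaced by a full series $-D\sum_n d_n\ln(\theta^{(i)}_{min}) - D\sum_n d_n\psi(\rho^{(i)}+n)$, which is precisely what happens when the single Gamma interferer is swapped for a Gamma sum.

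The routine-but-delicate part, exactly as in Proposition~\ref{prop:avgRateHomo} and Proposition~\ref{prop:General_Psucc_Het}, is justifying the infinite series manipulations: one must check that the expansion of Theorem~\ref{thm:mos} converges and that evaluating $\bbE\ln(\cdot)$ term by term against that mixture representation is legitimate, which follows from the nonnegativity of the $c_n, d_n$ and the truncation error bounds cited after Theorem~\ref{thm:mos}. I do not expect any genuine obstacle beyond this bookkeeping, since both ingredients (Proposition~\ref{prop:exactSum} and Theorem~\ref{thm:mos}) are already established; the real care is simply in keeping the two parameter sets straight --- the $(M{+}1)$-fold one attached to $S_r + J(r)$ and the $M$-fold one attached to $J(r)$ --- and matching the constants to $C, D, \rho^{(si)}, \rho^{(i)}, \theta^{(si)}_{min}, \theta^{(i)}_{min}$.
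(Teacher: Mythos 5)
Your proof is correct and follows essentially the same route as the paper: the paper's own proof is a single sentence applying Proposition~\ref{prop:exactSum} once to $\bbE\ln\bigl(S_r + I_r^{(1)}+\cdots+I_r^{(M)}\bigr)$ and once to $\bbE\ln\bigl(I_r^{(1)}+\cdots+I_r^{(M)}\bigr)$ and subtracting, exactly as you do. The only material you add is the explicit conversion between the form returned by Proposition~\ref{prop:exactSum} and the form in the statement (via the normalization $C\sum_n c_n = 1$ of the Moschopoulos mixture weights and the digamma recurrence), a step the paper leaves implicit.
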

\begin{proof}
The first pair of terms result from application of Proposition \ref{prop:exactSum} to the term $\bbE \ln S_r ~ + ~ I_r^{(1)} ~+ ~$ $ I_r^{(2)} ~ + ~ \cdots ~ + ~ I_r^{(M)}$, while the second term results from application of Proposition \ref{prop:exactSum} to the term $\bbE \ln I_r^{(1)} ~+ ~ I_r^{(2)} ~ \cdots ~ I_r^{(M)}$.
\end{proof}

If the number of terms is too large for practical computation, an alternative is to approximate the sum of the signal plus interference term as another Gamma random variable using the moment matching approach.
\begin{proposition}[Homogeneous Average Rate -- Gamma]\label{claim:AverageRateHomoGamma}
The average achievable rate without noise with homogeneous interference is approximately
\begin{align}
\tau(r) & \approx   \psi \left(\frac{(k_s \theta_s + k_z \theta_z)^2}{k_s \theta_s^2 + k_z \theta_z^2} \right) + \ln \left(\frac{k_s \theta_s^2 + k_z \theta_z^2}{k_s \theta_s + k_z \theta_z} \right) \nonumber \\
&-  \psi(k_i) - \ln(\theta_i).
\end{align}
\end{proposition}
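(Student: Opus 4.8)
The plan is to work from the decomposition already used in Section~\ref{sec:AverageRate}: since $1 + \SIR(r) = 1 + S_r/I_r = (S_r + I_r)/I_r$, we have
\begin{equation}
\tau(r) = \bbE \ln(1+\SIR(r)) = \bbE\ln(S_r + I_r) - \bbE\ln I_r ,
\end{equation}
where $S_r \sim \Gamma[k_s,\theta_s]$ and the aggregate interference is modeled by the Gamma surrogate of Proposition~\ref{prop:GammaApproxInf}, i.e.\ $I_r \sim \Gamma[k_i,\theta_i]$ (these are the parameters written $k_z,\theta_z$ in the statement). The second term is handled immediately by Lemma~\ref{lem:GammaCapacity}: $\bbE\ln I_r = \psi(k_i) + \ln\theta_i$, which is exactly the pair of subtracted terms in the claim.

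For the first term I would \emph{not} use the exact series of Proposition~\ref{prop:exactSum} (as was done in Proposition~\ref{prop:avgRateHomo}); instead I approximate the sum of the two independent Gamma variables $S_r + I_r$ by a single Gamma variable $\Gamma[k_y,\theta_y]$ with matched first and second moments. By independence the mean of $S_r + I_r$ is $k_s\theta_s + k_z\theta_z$, and since variances add its variance is $k_s\theta_s^2 + k_z\theta_z^2$; feeding these into Lemma~\ref{lem:gammaApprox} (equivalently Lemma~\ref{lem:GammaApproximation}) gives
\begin{equation}
k_y = \frac{(k_s\theta_s + k_z\theta_z)^2}{k_s\theta_s^2 + k_z\theta_z^2}, \qquad
\theta_y = \frac{k_s\theta_s^2 + k_z\theta_z^2}{k_s\theta_s + k_z\theta_z} .
\end{equation}
Applying Lemma~\ref{lem:GammaCapacity} a second time to this surrogate yields $\bbE\ln(S_r + I_r) \approx \psi(k_y) + \ln\theta_y$, and combining with the $\bbE\ln I_r$ term reproduces the stated expression verbatim once $k_y$ and $\theta_y$ are substituted.

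The only place the ``$\approx$'' enters is the moment match for $S_r + I_r$: because $\bbE\ln(\cdot)$ is not a functional of the first two moments alone, this is a genuine approximation rather than an identity, and the main thing to argue is that it is a reasonable one. Here I would appeal to the moment-problem error bounds already cited in Section~\ref{sec:background} together with the numerical comparison in~\cite{HeaWuKwo:Multiuser-MIMO-in-Distributed:11} showing that the second-order Gamma match tracks the true sum distribution closely except when a large number of series terms would be required. Beyond that there is no real obstacle — the derivation is two invocations of Lemma~\ref{lem:GammaCapacity} wrapped around one invocation of Lemma~\ref{lem:gammaApprox}, so it is essentially bookkeeping once the logarithmic decomposition is written down.
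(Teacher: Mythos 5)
Your proposal is correct and follows essentially the same route as the paper's own proof: decompose $\tau(r)=\bbE\ln(S_r+I_r)-\bbE\ln I_r$, moment-match the sum $S_r+I_r$ to a single Gamma variable via Lemma~\ref{lem:gammaApprox}, and apply Lemma~\ref{lem:GammaCapacity} to each term. You also correctly read the statement's $k_z\theta_z$ as the interference parameters $k_i\theta_i$ (the notation in the proposition is inconsistent with the proof's definition of $k_z,\theta_z$ as the surrogate for the sum), and your added remark on where the approximation error enters is a reasonable bonus beyond what the paper states.
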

\begin{proof}
To find $\bbE \ln (S_r + I_r)$ we apply Lemma \ref{lem:gammaApprox} and approximate the sum of Gamma random variables with another Gamma distribution with the same mean and variance. Note that
\begin{align}
\bbE ~S_r + I_r & =  k_s \theta_s + k_i \theta_i \\
\mathrm{var}~S_r + I_r & = k_s \theta_s^2 + k_i \theta_i^2.
\end{align}
Then a Gamma random variable $Z \sim \Gamma[k_z, \theta_z]$ with the same mean and variance has
\begin{align}
k_z  =  \frac{(k_s \theta_s + k_i \theta_i)^2}{k_s \theta_s^2 + k_i \theta_i^2} ~~~
\theta_z  =  \frac{k_s \theta_s^2 + k_i \theta_i^2}{k_s \theta_s + k_i \theta_i}.
\end{align}
Now we can approximate the term
\begin{align}
\bbE \ln (S_r + I_r ) & \approx  \psi \left(\frac{(k_s \theta_s + k_i \theta_i)^2}{k_s \theta_s^2 + k_i \theta_i^2} \right) + \ln \left(\frac{k_s \theta_s^2 + k_i \theta_i^2}{k_s \theta_s + k_i \theta_i} \right)
\end{align}
where the last part follows from Lemma \ref{lem:GammaCapacity}.
\end{proof}
The result can be extended to the case of heterogeneous interference by modifying the computation of the mean and variance terms.


\section{Simulations} \label{sec:simulations}

In this section we consider a system model of the form in \figref{fig:interference_distribution_guard} with a single antenna base station and a Poisson field of base station interferers. The density of  interferers $\lambda_0$ equals $\frac{1}{16\times300^2}$ transmitters per $m^2$. We evaluate performance in a fixed cell of radius $R_c=\frac{1}{4\sqrt{\lambda}}=300m$ which is the radius of the typical cell to make an equivalent comparison with the stochastic geometry model. In the homogeneous case, the radius of the guard region $R_g$ is $300m$. The macro base station has $40W$ transmit power and Rayleigh fading is assumed. The path-loss model of \eqref{eq:pathlossFixed} is used with $n=4$, $d_0=35m$, and $C=33.88$. Log-normal shadowing is assumed with $6dB$ variance. Monte Carlo simulations are performed by simulating base station locations over a square of dimension $15R_c \times 15 R_c$. The mutual information and the outage probability are estimated from their sample averages over $200$ small scale fading realizations, and $500$ different interferer positions. Performance is evaluated as a function of $r=\beta R_c$, the distance from the center of the fixed cell.

First, we validate two of the key concepts used in the proposed model in this paper: the use of guard region and of the dominant interferer. In \figref{fig:sim:baseGuardComparison} we compare the performance between the proposed hybrid approach and a PPP layout of base stations as in the stochastic geometry model. Our proposed hybrid approach matches the stochastic geometry model well at all locations in terms of SIR distribution. Without including the dominant interferer at the cell edge, however, the result is too optimistic.

\begin{figure} [!ht]
\centerline{
\includegraphics[width=0.92\columnwidth]{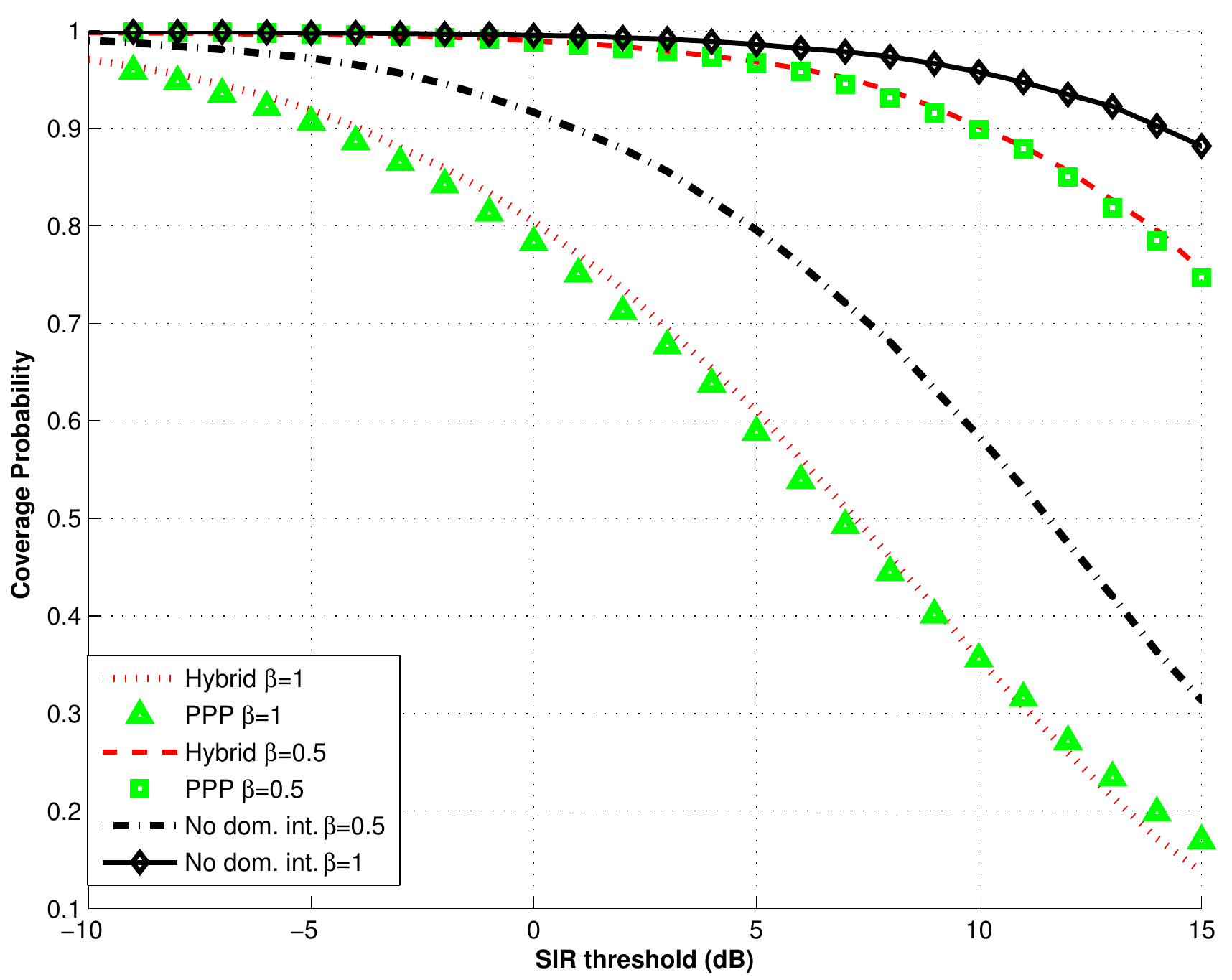}
}
\caption{The coverage probability as a function of $r=\beta R_c$ the distance from the cell center. Our proposed hybrid model matches the stochastic geometry model where base stations are distributed as a PPP. Further, we show that the dominant interferer is essential; without a dominant interferer, the error becomes large.
 }
\label{fig:sim:baseGuardComparison}
\end{figure}

Now we consider the Gamma approximation for the interference field. We claim that the Gamma distribution is a good approximation of the distribution strength of the interference power. Also, we show that the approximation that the interference power for a user at radius $r$ from the cell center is upper bounded by considering interferers outside a ball of radius $R_c+R_g-r$ provides a relatively tight bound in terms of SIR distribution.

\begin{figure} [!ht]
\centerline{
\includegraphics[width=0.92\columnwidth]{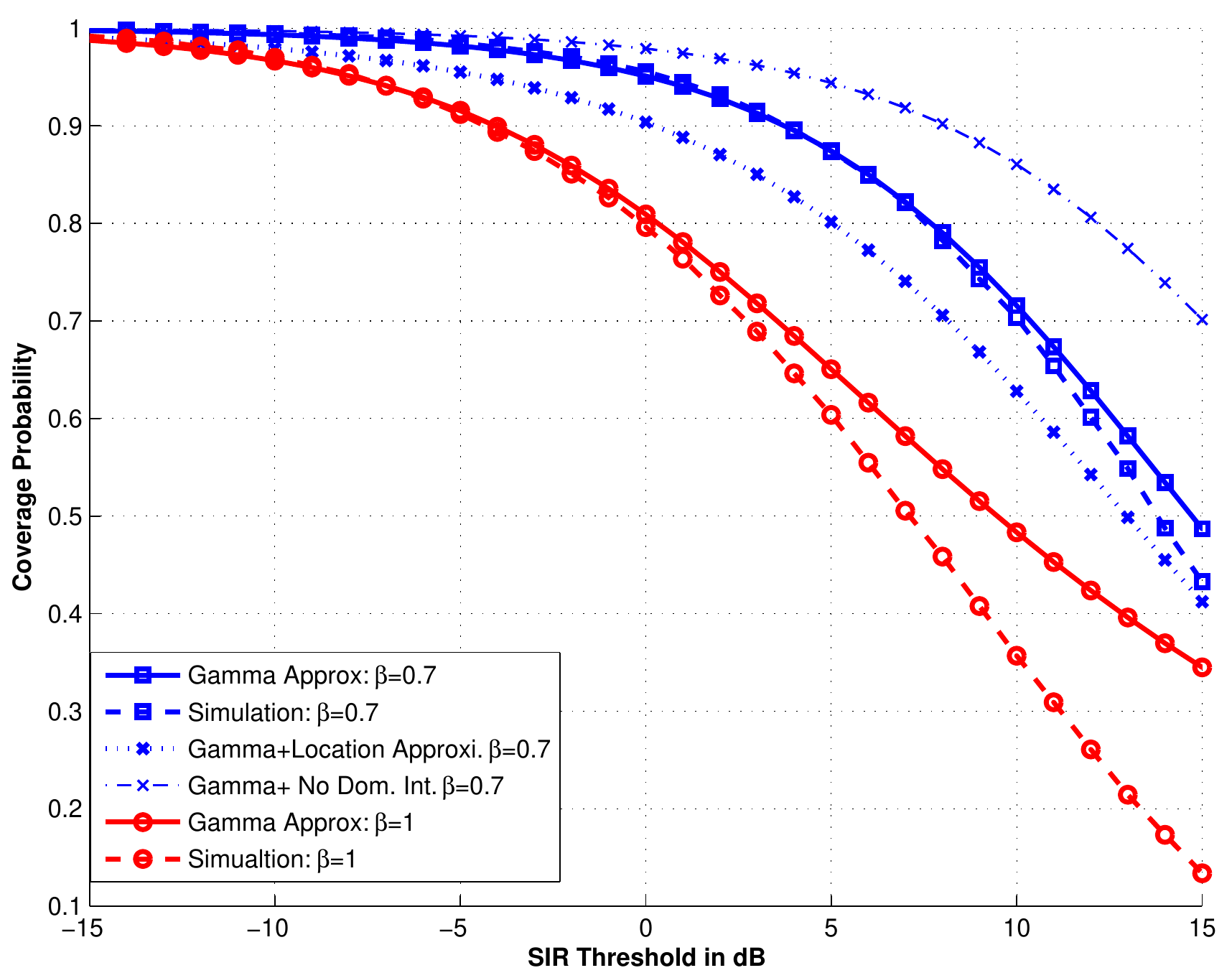}
}
\caption{The coverage probability for a receiver at a distance $0.7R_c$ from the center. The curves show that Gamma  distribution is a good approximation of the interference power. Moreover, the approximation to avoid asymmetric receiver locations provides a relatively tight bound in terms of coverage probability. The error becomes more significant at the cell edge when $\beta=1$ for high SIR, which is not a huge problem since the likelihood of large SIR at the cell edge is small.
 }
\label{fig:sim:lambdaComparison}
\end{figure}

Now we compare the Gamma approximation in terms of the average rate and the success probability in a case with heterogeneous interference. We consider a setup with a single antenna base station and three sources of interference. One source of interference is from other base stations, which has power and density as we already described. The second source of interference is a multiuser MIMO distributed antenna system (DAS) with four single antenna transmission points, a single active user per cell. Each DAS antenna has $20W$ transmission power. The DAS system is also modeled using a PPP but with density $4 \lambda_0$. The distributions of the effective channels are derived from \cite{HeaWuKwo:Multiuser-MIMO-in-Distributed:11}. Note that essentially we assume that every cell also has a DAS system; in reality some cells would have DAS and some would not, which can be taken into account by modifying the density. The third source of interference is from cross-tier interference from a single tier of femtocells. The femto interferers have $20dBm$ transmit power and a guard region of $5m$. Since this is cross-tier interference, the femtocells may be deployed in the cell and thus they create a constant level of interference power corresponding to a minimum distance of $5m$. An indoor-outdoor penetration loss of $10dB$ is assumed.

An important point is that, compared with the homogeneous case, the size (effective coverage area) of the fixed cell shrinks in the heterogeneous networks due to the increase of the overall interferer density. The reason is that we calculate the radius of the heterogeneous cell and the size of the guard region for each tier from \eqref{eq:rchet} and \eqref{eq:rghet}. Consequently, the coverage area is smaller when the network is more dense, as expected from intuition.

The results of the comparison are displayed in \figref{fig:sim:base_het_gamma_comparison} for average capacity and in \figref{fig:sim:base_het_gamma_comp_outage} for outage. In terms of average rate, the main conclusion is that the Gamma approximation provides good performance for both homogeneous and heterogeneous interference. There is more significant error at the cell edge in both cases. In terms of success probability, similar results are observed.  The fit with both homogeneous and heterogeneous interference is good, except for locations at the cell edge. While there is approximation error, we still believe that the fit is good enough to justify the viability of the Gamma distribution.
\begin{figure} [!ht]
\centerline{
\includegraphics[width=0.92\columnwidth]{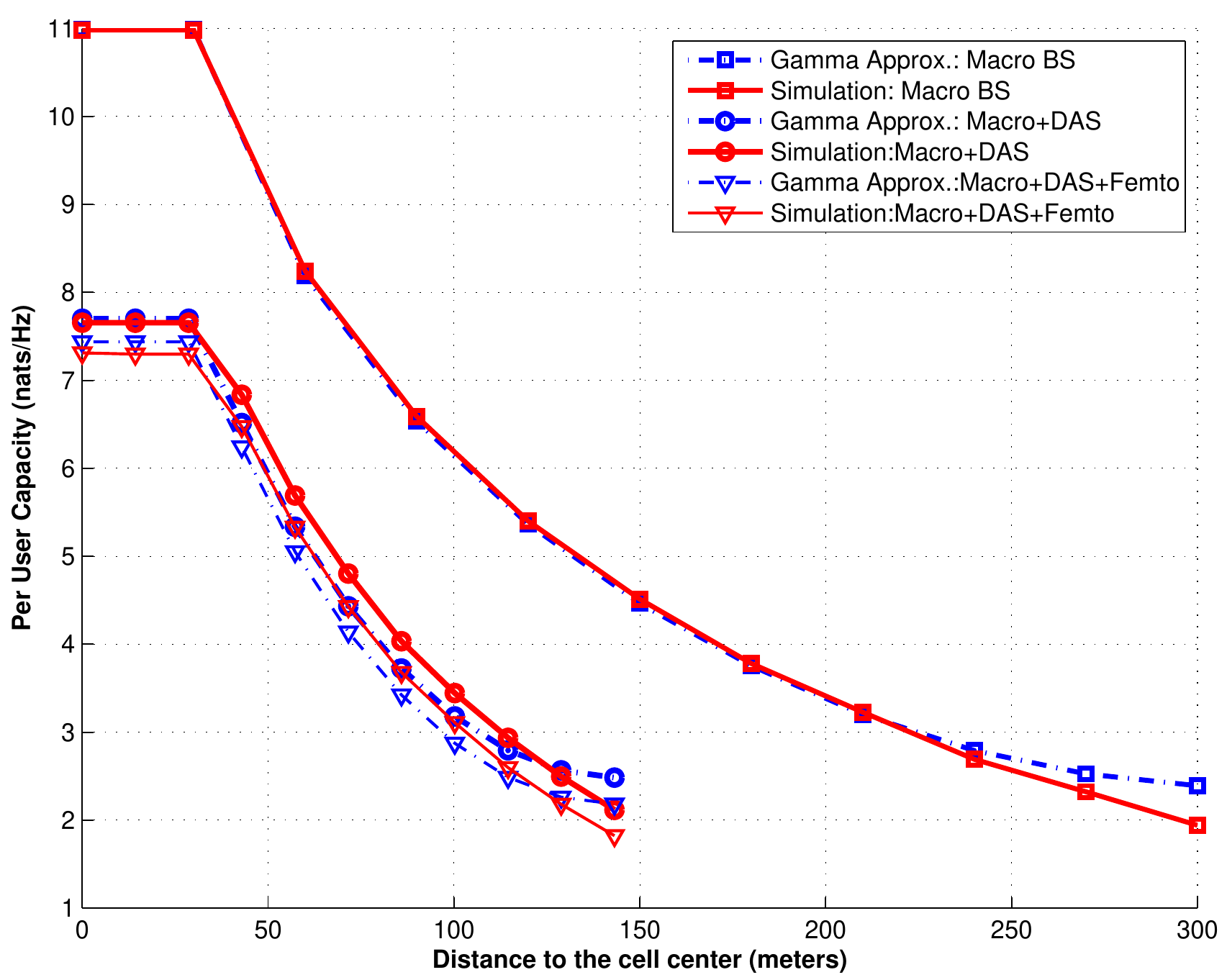}
}
\caption{Comparison of the ergodic rates with and without the Gamma approximation with a tier of low-power femto nodes that create cross-tier interference. In the heterogeneous case, $R_c$ is 143.20 meters, which is smaller than in the homogeneous case. We see that the Gamma approximation provides low error except towards the cell edge for the case. Cross-tier interference provides a constant offset of the entire curve of no cross-tier interference case. Thanks to the 10dB indoor-outdoor penetration loss, the offset is relatively small. The offset, notably, in both Gamma approximation and simulations is consistent, meaning it is still possible to make relative comparisons between the Gamma approximation curves.
 }
\label{fig:sim:base_het_gamma_comparison}
\end{figure}

\begin{figure} [!ht]
\centerline{
\includegraphics[width=0.92\columnwidth]{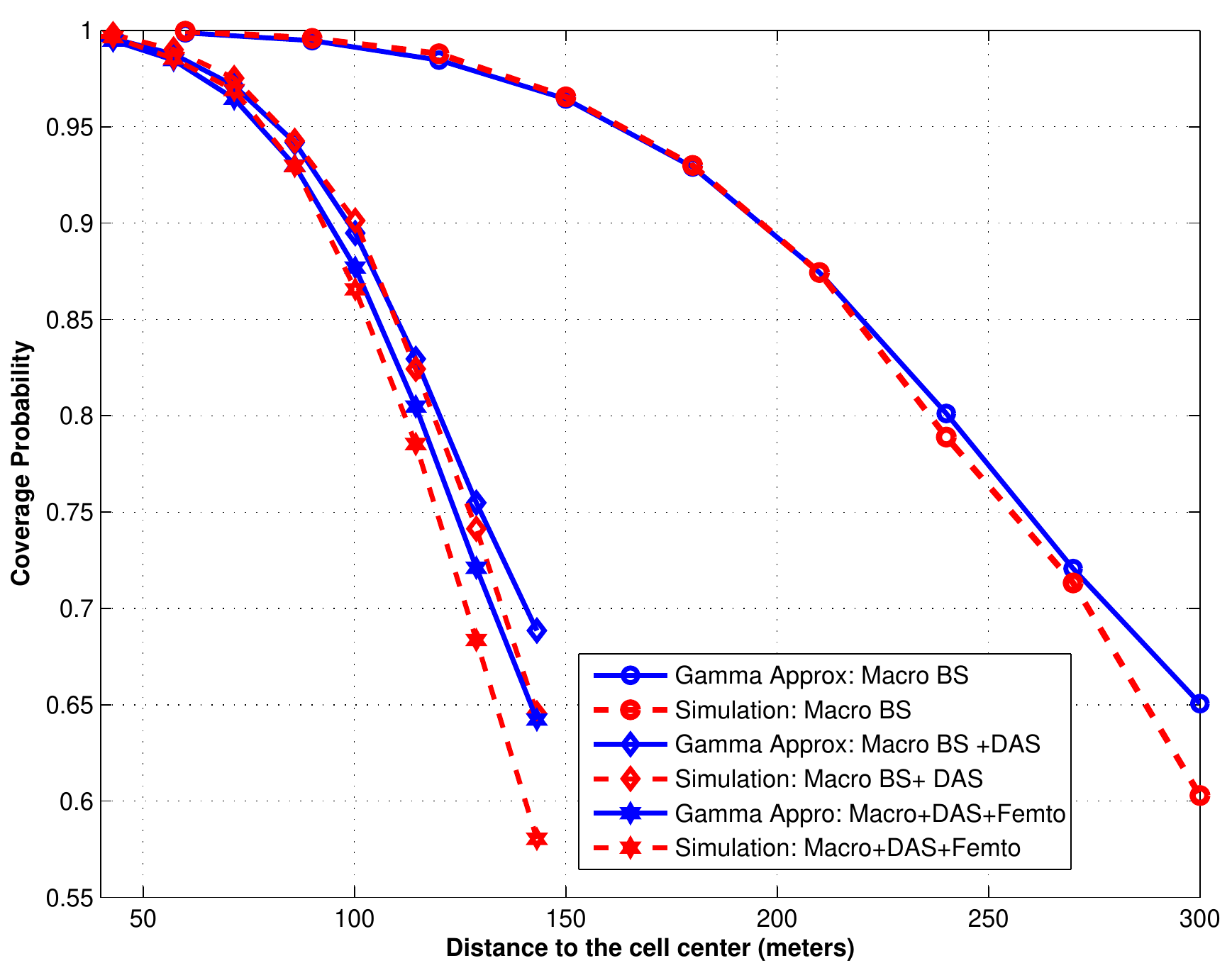}
}
\caption{Comparison of the success probability between the Gamma approximation and Monte Carlo simulations for a target $\SIR$ of $5dB$. We see that the Gamma approximation overall provides reasonable performance with a slight error at the cell edge.
 }
\label{fig:sim:base_het_gamma_comp_outage}
\end{figure}

\section{Conclusions} \label{sec:conclusions}
 In this paper, we proposed a hybrid model for determining the impact of interference in cellular systems. The key idea was to develop a model for the composite interference distribution outside a fixed  cell, as a function of the user position from the cell center. From a numerical perspective, our approach simplifies the simulation study of cellular systems by replacing the sum interference term with an equivalent interference random variable. Then, functions like average rate and success probability can be computed through numerical integration rather than Monte Carlo simulation. Unfortunately, the numerical integrals may still require a fair amount of computational power to compute. Consequently, we proposed to approximate the interference distribution by moment matching with the Gamma distribution. We showed how the Gamma distribution could be used to obtain relatively simple expressions for the success probability and ergodic rate, which simplify further when the sum interference power is approximated directly as a Gamma random variable. Simulations showed that with the introduction of guard region and a dominant interferer, a reasonable fit between the stochastic geometry model and the proposed model was achieved, with a small error at the cell edge when the Gamma approximation was employed. Future work is needed to better characterize the approximations, e.g. by developing expressions for bounding the error terms, and also incorporating more complex propagation models like those proposed in \cite{BaiVazHea:Using-random-shape-theory:12}.

\section*{Acknowledgments}

This material is based upon work supported in part by the National Science Foundation under Grant No. NSF-CCF-1218338, a gift from Huawei Technologies, and by the project HIATUS that acknowledges the financial support of the Future and Emerging Technologies (FET) programme within the Seventh Framework Programme for Research of the European Commission under FET-Open grant number: 265578.


\end{document}